\DeclareMathOperator{\len}{len}
\DeclareMathOperator{\vol}{vol}
\def\RR{\ensuremath{\mathbb R}}
\def\NN{\ensuremath{\mathbb N}}
\def\BB{\ensuremath{\mathbb B}}
\def\eps{\ensuremath{\varepsilon}}
\def\origin{\ensuremath{\bm{o}}}
\def\target{\ensuremath{\bm{t}}}
\definecolor{defblueee}{rgb}{0.1,0.4,0.6} % bit darker
\def\DEF#1{\textbf{\emph{\textcolor{defblueee}{#1}}}}
\title{Searching in Euclidean Spaces with Predictions\thanks{A preliminary version
of this work was presented at WAOA 2024~\cite{CabelloG24}.}}
\author{Sergio Cabello}{Faculty of Mathematics and Physics, University of Ljubljana, Slovenia\\ Institute of Mathematics, Physics and Mechanics, Slovenia}{sergio.cabello@fmf.uni-lj.si}{https://orcid.org/0000-0002-3183-4126}{}
\author{Panos Giannopoulos}{Department of Computer Science, City, University of London, UK}{Panos.Giannopoulos@city.ac.uk}{https://orcid.org/0000-0002-6261-1961}{}
\authorrunning{S. Cabello and P. Giannopoulos}
\keywords{search games, predictions, distance, computational geometry}
\begin{document}

\maketitle

\begin{abstract}
We study the problem of searching for a target at some unknown location 
in $\mathbb{R}^d$ when additional information regarding the position of 
the target is available in the form of predictions.  In our setting, 
predictions come as approximate distances to the target: 
for each point $p\in \mathbb{R}^d$ that the searcher visits, we obtain 
a value $\lambda(p)$ such that $|p\bm{t}|\le \lambda(p) \le c\cdot |p\bm{t}|$, 
where $c\ge 1$ is a fixed constant, $\bm{t}$ is the position of the target, 
and $|p\bm{t}|$ is the Euclidean distance of $p$ to $\bm{t}$. The cost of the search is the length of the path followed by the searcher.
Our main positive result is a strategy that achieves $(10c)^{d+1}$-competitive ratio, 
even when the constant $c$ is unknown. 
We also give a lower bound of roughly $(c/4)^{d-1}$ on the competitive 
ratio of any search strategy in $\RR^d$, assuming that $c\ge 4$.
\end{abstract}

%\newpage

\section{Introduction}

The problem of searching for a target positioned at some unknown location 
in some region is a classic search game problem that has been well-studied 
in both fields of Computational Geometry and Operations Research. The problem 
comes in many different versions, such as linear search (i.e., searching 
for a target on a line)~\cite{Bellman63, Beck64, BeckN70, Gal74}, 
searching in the plane~\cite{Bellman56, Isbell57, Gal80, Baeza-YatesCR_InfComp93}, 
searching in concurrent rays~\cite{Gal72, Gal74, Baeza-YatesCR_InfComp93}, and searching 
inside polygonal regions~\cite{Klein05, Schuierer98}. %Section~\ref{sec:related_work} contains a detailed account of related work. 
The book by Alpern and Gal~\cite{AlpernG_Book03} provides an extensive overview of general search games, while Ghosh and Klein~\cite{GhoshK_CompSciRev10} survey search problems in planar domains. 

The searcher starts from some given position, follows some path according 
to some strategy until the target, usually a point, is reached or detected, for some appropriate 
definition of ``detection''. In $1$-dimensional settings, e.g., an infinite line, one usually requires 
that the searcher passes through the target point. In the plane one 
usually requires that the searcher is within some distance of the target, or sees the target, if there are obstacles, or that the target lies on the segment connecting the searcher's current and starting positions~\cite{Gal80, GhoshK_CompSciRev10, Langetepe_SODA10}. 
%Another example of a detection type in the plane is the one where the target is found when it lies on the segment connecting the searcher's current and starting positions~\cite{Gal80, Langetepe_SODA10}. 
%This models the problem of sweeping the plane with a cord that is 
%infinitely elastic, has one endpoint fixed at the origin and the other 
%endpoint is carried by the searcher; the target is then found when the elastic 
%cord sweeps its position.

The cost of the search is the length of the path followed by the searcher 
and the objective is to find an efficient search strategy. Efficiency is 
usually measured by the \DEF{competitive ratio}, which, in this setting, 
is the ratio of the length of the path of the searcher to the actual 
Euclidean distance of the starting position to the target. 

In this work, we consider the problem of finding a target point $\target$ in 
Euclidean space $\mathbb{R}^d$ when additional information regarding 
the position of the target is available in the form of \DEF{predictions}.
Here, the predictions are the approximate distance to $\target$ 
for all points visited during the search, e.g., a value between, say, $|p\target|$ and $2|p\target|$
for each point $p$ visited.
Such an estimate could be obtained for example in a scenario where
one takes into account the strength of a signal broadcasted by the target.

Algorithms with predictions is a concept that has been introduced
relatively recently; see the survey by
Mitzenmacher and Vassilvitskii~\cite{MitzenmacherV_BeyondW_CAA20}.
The general idea is that on top of the usual input data 
we are also given additional and possibly inaccurate (noisy) information, the prediction, 
that should assist the algorithm to be more effective.
The improvement in performance depends on the accuracy of the
prediction.
%In a setting somewhat similar to ours, Banerjee et al.~\cite{BanerjeeC-AGL_ITCS23} considered the problem of searching for a target in a graph where each node gives a prediction of its distance to the target; 
%see Section~\ref{sec:related_work} for more detail.
%we discuss this work in more detail in Section~\ref{sec:related_work}.

We continue this section with the problem setup, a summary of our contribution together with a short discussion on our predictions model, and related work.

\subsection{Problem setup}
We consider the following search problem in $\RR^d$. %, for constant $d$.
Assume that there is a fixed but unknown \DEF{target} point~$\target\in \RR^d$.
Without loss of generality, we start the search at the origin, which we denote by~$\origin$. 
We want to find a curve $\gamma$ that starts at~$\origin$ and ends at~$\target$.
The \DEF{cost} of the search is the Euclidean length of the curve $\gamma$.

As we search for the target, we have approximate information about the distance 
to it from each point that we have visited so far.
%and the search strategy decides how to continue the search depending on that information. 
More precisely, we assume that there is a constant $c\ge 1$  
and an unknown function
\begin{equation}
\label{eq:condition1}
	\lambda\colon\RR^d \rightarrow \RR_{\ge 0} ~\text{ such that }~
    \forall p\in \RR^d~~ |p\target|\le \lambda(p) \le c\cdot |p\target|.
\end{equation}
We refer to such a function $\lambda$ as a \DEF{$c$-prediction} for the target $\target$.
The constant $c$ is the \DEF{prediction factor} of $\lambda$.
See Figure~\ref{fig:lambda} for an example in $d=1$.
Note that for $c=1$, the function $\lambda$ gives the exact distance to the target.

\begin{figure}\centering
	\includegraphics[page=1,scale=.9]{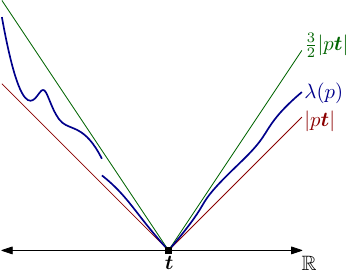}
	\caption{Example of $c$-prediction function $\lambda(p)$ for $c=3/2$. 
		In this example the function is not monotone in $|p\target|$ 
		and it is not continuous.}
	\label{fig:lambda}
\end{figure}

For each point $p$ along the search path we have traversed so far, 
we obtain the value $\lambda(p)$, and the search strategy decides how 
to continue the search depending on that information.
We know when we have reached the target because 
$\lambda(p)=0$ holds only when $p=\target$.

As it is common in search games, we are interested in the competitive ratio of the search
strategy: how does the length of the search path compares to the straight-line
distance from the origin to the target?
To formalize this, for each target $\target\in \RR^d$ and each constant $c\ge 1$, 
we consider the family $\Lambda(\target,c)$ of $c$-predictions for $\target$, 
that is, functions satisfying condition \eqref{eq:condition1}. 
The family of $c$-prediction functions is $\cup_{\target \in \RR^d} \Lambda(\target,c)$.

A search strategy $S$ is \DEF{$\alpha=\alpha(S,c,d)$ competitive} if for all $\target\in \RR^d$ 
and all $\lambda\in \Lambda(\target,c)$, the length of the path
defined by $S$ to reach $\target$ from $\origin$
is at most $\alpha |\origin \target|$. 
Note that we have two possible regimes, depending on
whether $c$ is known or unknown to the search strategy.

\subsection{Our contribution}

Our main contributions are the following:

\begin{itemize}
	\item We introduce a natural, new search problem 
		in $\RR^d$ for $d\ge 1$, under a predictions model where we have approximate information 
		about the distance to the target.
	\item We show that for each dimension $d$ and each
		constant prediction factor $c\ge 1$ there is a search strategy 
		with competitive ratio smaller than $(5c)^{d+1}$. 
		To achieve this, we use $\eps$-nets from metric spaces, also known as $r$-nets, 
		and provide a path of finite length but an infinite number of pieces. 
		This result holds assuming that we know the prediction factor $c$.
		For \emph{unknown} prediction factor, a slightly different search strategy 
		leads to a competitive ratio smaller than $(10c)^{d+1}$. 
		These results are given in Section~\ref{sec:upper_bound}.
	\item We show that for $c\ge 4$,
		any deterministic search strategy in $\RR^d$ with $c$-predictions 
		will have a competitive ratio of at least 
		$(c/4)^{d-1}\cdot \min\{\sqrt{\pi/d},1\}$.
		For this result, we construct an infinite family of $c$-predictions
		and use a volume argument to give a lower bound on the length
		of any path that can discern which $c$-prediction from the family
		is the actual one. The approach is motivated by the techniques
		used to obtain approximation algorithms for 
		the Euclidean Traveling Salesperson with Neighbourhoods.
		However, in our setting we have an infinite number of neighbourhoods,
		one for each $c$-prediction.
		With a slightly worse constant, the lower bound holds also for randomized search strategies.
		This lower bound is shown in Section~\ref{sec:lower_bound}.
\end{itemize}

Additionally, we give several basic properties of the model.
For example, we show that without a prediction function throughout
the whole search, we cannot find the target using a path of bounded length,
and having an infinite number of pieces (segments) in the search path is unavoidable.
We also show that we may assume that the prediction function is continuous. 
We also note that when $c=1$, i.e., we have exact information
about the distance to the target, the target can be reached with
competitive ratio arbitrarily close to $1$. These results are presented in Section~\ref{sec:obs}.

To our knowledge, this is the first search problem in Euclidean spaces with $d\ge 2$
where a search point reaches a target point with constant competitive ratio; 
see the discussion below about related work.
For our upper bounds we use bounds on the cardinality of $\eps$-nets.
For the sake of simplicity in the calculations, we use suboptimal but simple-to-parse
estimates in the simplifications and the cardinality of $\eps$-nets.
Similarly, in our lower bounds we also use volume estimates that are easier to manipulate.
In any case, our upper and lower bounds for the best competitive ratio are still far apart.

Our model, as presented in the continuous setting, is scale-free and general. In the discrete setting, where the target has integral coordinates or where the target can be detected when the searcher is within a given distance from it, our strategies can be easily modified to have a finite number of steps. The lower bound holds in the discrete setting too, albeit with a slightly worse constant. Moreover, by having an approximate distance to the target as the prediction, the model can also accommodate underestimates of the actual distance (in addition to the default overestimates), which can be of interest in practical scenarios. These as well as other extensions are discussed in Section~\ref{sec:Extensions}.

%We note that Sections~\ref{sec:obs},~\ref{sec:upper_bound}, and~\ref{sec:lower_bound}
%can be read independently of each other. The results in Section~\ref{sec:obs} address 
%basic properties of the problem, but such results are expected and hardly surprising. 
%Thus, the reader may want to focus directly in Sections~\ref{sec:upper_bound} 
%and~\ref{sec:lower_bound}, where things get more interesting.

\subsection{Related work}
\label{sec:related_work}
For \emph{linear} search, 
%where the searcher and the target are placed on a line and the target is found when the searcher reaches (walks on) the target, 
when the exact distance to the target is known, one can easily find the target by walking at most three times this distance. When the distance to the target is unknown, Beck and Newman~\cite{BeckN70} and later Baeza-Yates at al.~\cite{Baeza-YatesCR_InfComp93} showed that a simple doubling strategy has competitive ratio $9$; here, a lower bound on the distance to the target is assumed, otherwise there is no search strategy with constant competitive ratio.  There are also other similar strategies with the same competitive ratio~\cite{GhoshK_CompSciRev10, BoseCD_TCS15}. Moreover, various approaches~\cite{Baeza-YatesCR_InfComp93,BeckN70,HipkeIKL_DAM99,KupavskiiW_DistrComp21} show that $9$ is the best possible competitive ratio for the problem. 
 
%In the most common settings, we do not have any information about the 
%position of the target. 

Gal~\cite{Gal72, Gal74} introduced the problem of searching for a target on multiple \emph{rays} that 
are concurrent at the starting position and gave an optimal strategy for the case where the distance to the target is unknown. %(When the distance is known, the problem admits a straightforward optimal strategy.) 
This result was rediscovered by Baeza-Yates et al.~\cite{Baeza-YatesCR_InfComp93}.

Baeza-Yates et al.~\cite{Baeza-YatesCR_InfComp93} also considered the problem of finding a target with \emph{integer coordinates} in the plane and presented various search strategies. When the distance to the target is known, it is also easy to get an optimal strategy. However, when the distance is unknown, and with no additional information available, no search strategy can have constant competitive ratio as there are $\Theta(n^2)$ integral points within distance at most $n$ from the origin and any search strategy has to visit all of them in some order. For the natural extension of the problem in $\RR^d$, the latter generalizes to any $d\geq 3$, as there are $\Theta(n^{d})$ integral points within distance at most $n$ from the origin. When we are in $\RR^d$ and the distance to the target is known, we hit a classical problem in Number Theory: on how many ways can we express a positive integer as sum of $d$ squares of integers. For $d\geq 5$, there is no search strategy with constant competitive ratio as there are there are $\Omega(n^{d/2-1})$ integral points at distance exactly $n$ from the origin; see, for example, Vaughan and Wooley~\cite{VaughanW_JRAM18}.
%Baeza-Yates, Culberson and Rawlins~\cite{Baeza-YatesCR_InfComp93}
%solve (the easier) problem of searching for a target on the line or concurrent 
%when we know the exact distance to the target. They also consider the problem 
%of finding a target with integer coordinates in the plane. 
%When we do not have a bound on the distance, there is no search strategy
%with constant competitive ratio.
%When we know the distance to the target, it is easy to get an optimal strategy.
%The setting can be extended to higher dimensions, but then there is no
%search strategy with constant competitive ratio because in $\RR^d$
%there are $\Theta(n^{d-1})$ integral points at distance exactly $n$ from the origin,
%and any search strategy has to visit all of them in some order.

In another variant of the problem in the plane by Gal~\cite{Gal80}, the searcher travels along a path until the target lies on the \emph{segment} connecting the searcher's current and starting positions, essentially sweeping around its starting position with an infinitely elastic cord until the target is swept. For this problem, Gal~\cite{Gal80} gave a spiral search strategy achieving a competitive ratio of $17.289{\ldots}$,  while Langetepe~\cite{Langetepe_SODA10} showed that this ratio is optimal.

Hipke et al.~\cite{HipkeIKL_DAM99} considered linear search when the target is at \emph{distance}
at least $1$ and at most $D\ge 1$ from the starting point, where $D$ is 
known at the start of the search. Bose et al.~\cite{BoseCD_TCS15} 
provided a more careful analysis using the roots of a recursive 
sequence of polynomials and gave better lower and upper bounds on the competitive ratio with dependence on $D$. 
L{\'{o}}pez-Ortiz and Schuierer~\cite{Lopez-OrtizS01} considered also this setting, for the case of concurrent rays. 
Compared to these works, there are two main differences in our work.
Firstly and most importantly, we have a prediction all the way through the search,
while they have a prediction only at the start.
Secondly, we consider the problem in more general settings, namely in $\RR^d$
for arbitrary $d$. An upper bound at the start
does not suffice to find a point when $d\ge 2$. (If $d=2$ and the exact distance
is known, the problem can be easily solved since the target has to lie on a known circle, 
and that is the only additional instance that is solvable.)
%For $d=1$, we can of course use the relevant search strategies above, but
%in principle there could be more efficient search strategies that
%do use the predictions through the search. 
%Finally, we should also note that our setting is continuous, i.e., the target can be anywhere in $\RR^d$ and as such, for $d\ge 2$, our strategies have infinite number of steps while producing a finite-length path. When the target has integral coordinates, our strategies can be easily modified to have a finite number of steps. 

%In a setting somewhat similar to ours, 
Banerjee et al.~\cite{BanerjeeC-AGL_ITCS23} considered 
the problem of finding a target in a \emph{graph} with information about the distance
to the target. In their model, the target is at one vertex of the graph 
and at each vertex we have a value stored that is made available only 
when we are adjacent to the vertex.
For most vertices, the value stored at a vertex is the true distance 
from the vertex to the target, but for some vertices the value is wrong. 
Contrary to our setting, they do not assume a bounded error for the information 
at each vertex, but that the information is wrong at 
a bounded number of nodes. The bound on the number of nodes 
with wrong information then appears in the bound for the length of the 
search path. 

Finally, Angelopoulos~\cite{Angelop_ITCS21, Angelop_MFCS23} gave strategies for linear and multiple-ray search under a different model where a one-off, possibly erroneous hint or prediction on the target's position is given at the start of the search. The prediction can be positional, directional, or, in general, a $k$-bit string encoding answers to $k$ binary queries and the measure of the performance of a strategy is a trade-off between the competitive ratio under error-free prediction and that under erroneous prediction.

%%%%%%%%%%%%%%%%%%%%%%%%%%%%%%%%%%%%%%%%%%%%%%%%%%%%%%%%%%%%%%%%%%%%%%%%%%%%%%%%%%%%%%%%%%%%%%%%%%%%%%%%%%%%%%%%%%%%%%%%%%%%%%%%%%%%%%%%%%%%%%%%%%%%%%%%%%%%%%%%%%%%%%%%%%%%%%%%%%%%%%%%%%%%%%%%%%%%%%%%%%%%%%%%%%%%%%%%%%%%%%%%%%%%%%%%%%%%%%%%%%%%%%%%%%%%%%%%%%%%%%%%%%%%%%%%%%%%%%%%%%%%%%%%%%%%%%%%%
\section{Notation and preliminaries}
Since the dimension $d$ is always fixed and clear from the context, 
we drop in the notation the dependency on $d$.

The \DEF{ball} centered at $p\in \RR^d$ with radius $r$ is 
$B(p,r)=\{ q\in \RR^d\mid |pq|\le r \}$.
We will also consider the \DEF{spherical shells}
$S(p,r_1,r_2)= \{ q\in \RR^d\mid r_1\le |pq|\le r_2 \}$.
A spherical shell in the plane is an annulus. For a $c$-prediction function $\lambda$, 
whenever we are at a point $p\in \RR^d$, we get a prediction $\lambda(p)$
and we deduce that the target point $\target$ lies in the spherical shell
$S(p,\lambda(p)/c, \lambda(p))$. %See Figure~\ref{fig:annulus}.
See Figure~\ref{fig:annulus_and_net}.

%\begin{figure}\centering
%	\includegraphics[page=2,scale=.85]{all_figures}
%	\caption{Spherical shell $S(p,\lambda(p)/c,\lambda(p))$ where the target $\target$ must lie.}
%	\label{fig:annulus}
%\end{figure}

Note that we have made a modeling decision, namely
we have assumed that there is an unknown function $\lambda$ such that
at each point $p$ we get the prediction $\lambda(p)$.
More generally, it could happen that we visit the same point $p$ multiple
times and at each time we get a different estimate of the distance from $p$
to the target. However, getting a different estimate can only help,
as it provides more information: if we get two
different $c$-predictions $\lambda$ and $\lambda'$ at different times 
at the same point $p$, then we know that
the target lies in the spherical shell 
$S(p,\max\{\lambda,\lambda'\}/c, \min\{\lambda,\lambda'\})$.
This is more information than what we get if $\lambda'=\lambda$
because then we can only conclude that the target lies 
in $S(p,\lambda/c,\lambda)$, which is strictly larger.
Thus, when searching for an optimal search strategy, 
we can assume that each time we visit the same point we get the same 
prediction. In particular, a search strategy could simply ignore 
the information obtained in the second and subsequent visits 
to the same point.

A \DEF{path} in $\RR^d$ is a continuous function 
$\pi\colon [0,1] \rightarrow \RR^d$. The paths in our search strategies
will consist of an infinite number of straight-line segments
and will exhibit a Zeno-like phenomena: 
they make an infinite number of turns in finite time and length.
To show that the paths we define reach the target, 
we will use the following property,
whose proof is an standard argument in continuity.

\begin{lemma}
\label{le:continuous}
	Let $\pi\colon [0,1] \rightarrow \RR^d$ be a path.
	Assume that there is a point $p\in \RR^d$ with the following property:
	for each $\eps>0$ there exists some $\delta\in (0,1]$
	such that the subpath $\pi([1-\delta,1])$ is contained in $B(p,\eps)$.
	Then $\pi(1)=p$, that is, $p$ is the endpoint of the path $\pi$.
\end{lemma}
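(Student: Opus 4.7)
The plan is very short because the statement, despite being phrased in terms of a path, is essentially a statement about what happens at a single point, namely $\pi(1)$. I would not even need to invoke the continuity of $\pi$.

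First I would fix an arbitrary $\eps>0$ and apply the hypothesis to obtain some $\delta\in(0,1]$ with $\pi([1-\delta,1])\subseteq B(p,\eps)$. Since $1\in[1-\delta,1]$, this immediately gives $\pi(1)\in B(p,\eps)$, i.e.\ $|\pi(1)-p|\le \eps$. Then I would let $\eps\to 0$ to conclude $|\pi(1)-p|=0$, hence $\pi(1)=p$.

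There is no real obstacle: the argument is the standard fact that a point lying in every ball of positive radius around $p$ must coincide with $p$ (separation/Hausdorff property of $\RR^d$). The only thing to be careful about is to observe that the endpoint value $\pi(1)$ always belongs to the sub-image $\pi([1-\delta,1])$ regardless of how small $\delta$ is, which is what makes continuity unnecessary here. I would write the proof in two or three lines and move on.
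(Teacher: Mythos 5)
Your proof is correct for the lemma as stated, and it is in fact more elementary than what the authors have in mind: the paper describes the proof as ``a standard argument in continuity,'' i.e.\ one deduces $\lim_{s\to 1^-}\pi(s)=p$ from the hypothesis and then invokes continuity of $\pi$ at $s=1$ to conclude $\pi(1)=p$. You shortcut this by noting that $1\in[1-\delta,1]$ for every admissible $\delta$, so the hypothesis already forces $\pi(1)\in B(p,\eps)$ for every $\eps>0$, and continuity is never used. Both routes are valid, and yours is a legitimate two-line proof of the literal statement. One caveat worth recording: in the places where the lemma is applied (Theorems~\ref{thm:knownc} and~\ref{thm:unknownc}), the path $\gamma$ is an infinite concatenation of segments, so a priori it is only defined on $[0,1)$, and the substantive work the lemma is meant to do is to certify that setting $\gamma(1):=\target$ extends $\gamma$ to a genuine continuous path on $[0,1]$ ending at the target. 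For that purpose the continuity-based reading is the one carrying the content (the hypothesis gives $\lim_{s\to 1^-}\gamma(s)=\target$, which is exactly what makes the extension continuous), whereas your argument presupposes that $\pi(1)$ is already defined and that $\pi$ is already a path on the closed interval. As a proof of the statement exactly as written, however, your argument is complete and has no gap.
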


We will use $\eps$-nets from metric space theory: 
An \DEF{$\eps$-net} for the ball $B(p,r)$ is a subset $N$ 
of points from $B(p,r)$ such that 
(i) each point of $B(p,r)$ is at distance at most $\eps$ 
from some point of $N$, and
(ii) each two distinct points of $N$ are at distance at least $\eps$.
Condition (i) can be equivalently be stated as
$B(p,r)\subseteq\bigcup_{q\in N}B(q,\eps)$.
Condition (ii) is equivalent to telling
that the balls $B(q,\eps/2)$, where $q\in N$,
are pairwise interior disjoint.

The following bound on $\eps$-nets follows from a well-known 
technique using volumes. We will not use the lower bound,
but we include it to show that the upper bound is a reasonable
estimate.

\begin{lemma}
\label{lem:dense}
	In $\RR^d$, for each $\eps\le r/2$,
	a ball of radius $r$ has a $\eps$-net
	with at least $(r/\eps)^d$ and at most $(5r/2\eps)^d$ elements.
\end{lemma}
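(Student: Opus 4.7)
The plan is to construct a specific $\eps$-net by a maximality argument and then obtain both cardinality bounds via volume comparisons, one using packing and the other using covering.

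First I would establish existence. Let $N\subseteq B(p,r)$ be a maximal set with the property that every two distinct points in $N$ are at distance at least $\eps$; such a set exists by a straightforward greedy construction (pick points one by one, each at distance $\ge \eps$ from all previously chosen points, until this is no longer possible). By construction, condition (ii) is immediate. Condition (i) follows from maximality: if some $x\in B(p,r)$ were at distance more than $\eps$ from every point of $N$, then $N\cup\{x\}$ would still be $\eps$-separated, contradicting maximality. Hence $N$ is an $\eps$-net for $B(p,r)$.

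Next I would derive the upper bound $|N|\le (3r/\eps)^d$ by a packing argument. By condition (ii), the open balls $B(q,\eps/2)$ for $q\in N$ are pairwise interior disjoint. Since each center $q$ lies in $B(p,r)$, each such ball is contained in $B(p,r+\eps/2)$, and using the hypothesis $\eps\le r$ we have $r+\eps/2\le 3r/2$, so all these small balls fit inside $B(p,3r/2)$. Comparing volumes, and using that the volume of a Euclidean ball scales as the $d$-th power of its radius (so the common dimensional constant cancels), gives $|N|\cdot(\eps/2)^d\le (3r/2)^d$, i.e.\ $|N|\le (3r/\eps)^d$.

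The lower bound $|N|\ge (r/\eps)^d$ follows by a dual covering argument: by condition (i), $B(p,r)\subseteq\bigcup_{q\in N} B(q,\eps)$, so $\vol(B(p,r))\le |N|\cdot\vol(B(0,\eps))$. Again the dimensional constants cancel, yielding $r^d\le |N|\cdot \eps^d$, i.e.\ $|N|\ge(r/\eps)^d$.

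There is no real obstacle here: the entire argument rests on the volume of a ball scaling as radius$^d$ and the two defining properties (i) and (ii) of an $\eps$-net. The only point requiring care is the inclusion $B(p,r+\eps/2)\subseteq B(p,3r/2)$, which is exactly where the assumption $\eps\le r$ is used. Note also that although the statement phrases the bounds as ``has an $\eps$-net with\ldots'' the lower bound applies to any $\eps$-net and the upper bound applies to the specific one constructed, so both hold for the set $N$ produced above.
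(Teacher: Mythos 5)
Your proof is correct and follows essentially the same route as the paper: take a maximal $\eps$-separated subset, note that maximality gives the covering property, then bound the cardinality from below by a covering volume comparison and from above by packing the disjoint balls $B(q,\eps/2)$ into a slightly enlarged ball, using $\eps\le r$ exactly where the paper uses it. The only cosmetic difference is that the paper first normalizes to the unit ball by scaling, whereas you work with $B(p,r)$ directly.
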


\begin{proof}
	Using scaling and translation, it suffices to show
	the result for the unit ball centered at the origin, 
	$B(\origin,1)$, and for $\eps\le 1/2$.
	
	Let $N$ be an inclusion-wise maximal subset of points 
	from $B(\origin,1)$ such
	that any two points are at distance at least $\eps$.
	We claim that $N$ is a $\eps$-net.
	By definition, $N$ satisfies property (ii).
	Since by maximality we cannot add any other point of 
	$B(\origin,1)$ to $N$, each point of $B(\origin,1)$ 
	has some point of $N$ at distance at most $\eps$, 
	and thus property (i) is also satisfied.

	It remains to bound from below and above the cardinality of $N$.
	Because of property (i), the ball $B(\origin,1)$ is contained in
	$\bigcup_{q\in N} B(q,\eps)$.
	Using the constant $V_d=\tfrac{\pi^{d/2}}{\Gamma(d/2+1)}$ 
	that gives the volume of the unit ball in $\RR^d$,
	we then have 
	\[
		|N| \ge \frac{\vol(B(\origin, 1))}{\vol(B(\origin,\eps))} =
			\frac{V_d}{V_d\cdot \eps^d}
			= \frac{1}{\eps^d}.
	\]
	On the other hand, property (ii) implies that the family 
	of balls $\{ B(q,\eps/2) \mid q\in N\}$ are 
	pairwise interior disjoint. Each such ball is contained in the 
	enlarged ball $B(\origin, 1+\eps/2)$, and therefore
	\[
		|N| \le \frac{\vol(B(\origin, 1+\eps/2))}{\vol(B(\origin,\eps/2))} =
			\frac{V_d \cdot \big( (1+\eps/2)^d\big)}{V_d\cdot (\eps/2)^d}
			= \left(1+\frac{2}{\eps}\right)^d \le 
			\left( \frac{5}{2\eps} \right)^d,
	\]
	where in the last inequality we used that $\eps\le 1/2$.
\end{proof}

%%%%%%%%%%%%%%%%%%%%%%%%%%%%%%%%%%%%%%%%%%%%%%%%%%%%%%%%%%%%%%%%%%%%%%%%%%%%%%%%%%%%%%%%%%%%%%%%%%%%%%%%%%%%%%%%%%%%%%%%%%%%%%%%%%%%%%%%%%%%%%%%%%%%%%%%%%%%%%%%%%%%%%%%%%%%%%%%%%%%%%%%%%%%%%%%%%%%%%%%%%%%%%%%%%%%%%%%%%%%%%%%%%%%%%%%%%%%%%%%%%%%%%%%%%%%%%%%%%%%%%%%%%%%%%%%%%%%%%%%%%%%%%%%%%%%%%%%%

\section{Observations about the problem setup}
\label{sec:obs}

First, it is important to note that a $2$-dimensional disk cannot be covered 
by a curve of bounded length. 
This is definitely not surprising, but it requires a proof,
as the so-called space filling curves~\cite{Sagan_Book} 
can cover a $d$-dimensional body for $d\ge 2$.
Swanepoel~\cite{Swanepoel} provided 
a simple proof
for the following result.

\begin{lemma}
\label{lem:bound}
	For each $r>0$ and each $p\in \RR^2$, 
	any curve that passes through each point of the $2$-dimensional
	disk $B(p,r)$ has infinite length.
\end{lemma}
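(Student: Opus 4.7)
The plan is to combine the $\eps$-net lower bound from \cref{lem:dense} with the elementary observation that a curve visiting a collection of pairwise $\eps$-separated points must have length proportional to the number of such points.

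Fix an arbitrary $\eps \in (0,r]$ and apply \cref{lem:dense} (specialized to $d=2$) to obtain an $\eps$-net $N \subseteq B(p,r)$ with $|N| \ge (r/\eps)^2$. By property (ii) of an $\eps$-net, any two distinct points of $N$ are at distance at least $\eps$. Let $\gamma\colon [0,1]\rightarrow \RR^2$ be the curve in the statement. Since $\gamma$ passes through every point of $B(p,r)$, for each $q\in N$ we may choose a parameter $t_q\in [0,1]$ with $\gamma(t_q)=q$; these parameters are pairwise distinct because the $q$'s are. Enumerate the points of $N$ as $q_1,\dots,q_n$ so that $t_{q_1}<t_{q_2}<\dots<t_{q_n}$.

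The portion of $\gamma$ between times $t_{q_i}$ and $t_{q_{i+1}}$ is a curve joining $q_i$ to $q_{i+1}$, so its length is at least the Euclidean distance $|q_iq_{i+1}|\ge \eps$. Summing over $i=1,\dots,n-1$ yields
\[
    \len(\gamma) \;\ge\; (n-1)\,\eps \;\ge\; \left(\frac{r^2}{\eps^2}-1\right)\eps \;=\; \frac{r^2}{\eps}-\eps.
\]
Since $\eps>0$ was arbitrary, letting $\eps\to 0^+$ shows that $\len(\gamma)=\infty$.

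There is essentially no technical obstacle here; the only thing to check carefully is that the curve passing through every point of the disk really does hit the net points in some well-defined order along its parameterization, which is immediate because the times $t_q$ are distinct real numbers and may therefore be linearly ordered. The argument is exactly the standard volumetric/packing lower bound applied to a traversal, which is why the bound on $|N|$ from \cref{lem:dense} is the right tool.
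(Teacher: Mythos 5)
Your proof is correct and follows essentially the same packing argument as the paper: both place $\Theta\bigl((r/\eps)^2\bigr)$ pairwise $\eps$-separated points in the disk and observe that any curve visiting them all has length at least $(n-1)\eps$, which tends to infinity as $\eps\to 0^+$. The only cosmetic difference is that you reuse the $\eps$-net of Lemma~\ref{lem:dense} and order the visit times explicitly, whereas the paper uses a square grid and invokes the minimum spanning tree bound.
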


\begin{proof}%[Proof of Lemma~\ref{lem:bound}]
	Consider a rectangular, regular grid of $n\times n$ points inside $B(p,r)$ 
	with total side length $\Theta(r)$; the distance between any
	two neighbour grid points is $\Omega(r/n)$.
	Any curve that goes through those $n^2$ points has length 
	$\Theta(rn)$ because the Euclidean minimum spanning tree
	for those points has length $(n^2-1)\cdot \Theta(r/n)= \Theta(rn)$.
	Sine $n$ can be chosen arbitrarily large, for any number $L$, 
	we can find a set of points that cannot be in the image 
	of any path with length $L$.
\end{proof}

This result implies that, without additional 
information or constraints, a searcher cannot reach a target point
$\RR^d$ for $d\ge 2$, if it only knows that the target lies
inside a ball of positive radius, no matter how small it is.

We say that a $c$-prediction function $\lambda\in \Lambda(\target,c)$ 
is \DEF{open} if for all $p\neq \target$
we have $|p\target| < \lambda(p) < c\cdot |p\target|$. In other words,
at each non-target point the prediction is not the correct distance and it is not 
the maximum possible it could be. This implies that $\target$ lies
in the \emph{interior} of the spherical shell $S(p,\lambda(p)/c, \lambda(p))$,
which is an open set. For example, for $c>1$
the function $\lambda(p)=\tfrac{1+c}{2} \cdot |p\target|$
is open because $1< \tfrac{1+c}{2} < c$.
Note that the definition is meaningful only for $c>1$.
In the following we show that, in general, until we do not reach the
target point, there is a ball such that any point in the ball can still be the target.

\begin{proposition}
\label{pro:flexibility}
	Assume that we have explored a compact set $X$ of points of $\RR^d$, 
	that the target $\target$ does not belong to $X$ and that 
	the $c$-prediction $\lambda$ is open and continuous.
	There exist an $\eps>0$ such that the ball 
	$B(\target,\eps)$ has the following property: 
	for each $\target'\in B(\target,\eps)$, there is a $c$-prediction 
	$\lambda'$ in $\Lambda(\target',c)$ that is continuous, open and
	such that $\lambda'(p)=\lambda(p)$ for all $p\in X$.
\end{proposition}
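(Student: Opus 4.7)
The plan is to exploit compactness of $X$ and the strictness coming from openness of $\lambda$ to obtain a uniform margin, then construct $\lambda'$ by blending $\lambda$ (on a neighborhood of $X$) with a canonical open prediction for $\target'$ (elsewhere), using a continuous cutoff.

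First I would extract uniform margins. Since $X$ is compact, $\target \notin X$, and $\lambda$ is continuous and open, the continuous functions $p\mapsto \lambda(p)-|p\target|$ and $p\mapsto c|p\target|-\lambda(p)$ are strictly positive on $X$ and hence attain positive minima; let $\eta>0$ be smaller than both, and let $d_X = \mathrm{dist}(\target,X) > 0$. By continuity of $\lambda$ and of $p\mapsto |p\target|$, there is a $\delta \in (0,d_X/4)$ small enough that the open $\delta$-neighborhood $U$ of $X$ satisfies $\lambda(p)-|p\target|\ge \eta/2$ and $c|p\target|-\lambda(p)\ge \eta/2$ for every $p\in U$. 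In particular $U\subseteq \{p : |p\target|\ge 3d_X/4\}$, so $U$ is disjoint from $B(\target, d_X/2)$.

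Choose $\eps>0$ with $\eps \le d_X/4$ and $(1{+}c)\eps < \eta/4$. Then for every $\target'\in B(\target,\eps)$ and every $p\in U$ we have $\target'\neq p$, and the triangle inequality yields
\[
   \lambda(p)-|p\target'| \ge \tfrac{\eta}{2}-\eps > 0,
   \qquad
   c|p\target'|-\lambda(p)\ge \tfrac{\eta}{2}-c\eps > 0,
\]
so $\lambda$ itself satisfies the \emph{strict} open $c$-prediction inequalities for $\target'$ on all of $U$. Now fix such a $\target'$ and pick a continuous cutoff $\phi\colon \RR^d\to [0,1]$ with $\phi\equiv 1$ on $X$ and $\mathrm{supp}(\phi)\subseteq U$, for instance $\phi(p)=\max\{0,\,1-\mathrm{dist}(p,X)/\delta\}$. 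Because $B(\target,2\eps)\subseteq B(\target,d_X/2)$ is disjoint from $U$, we have $\phi\equiv 0$ on $B(\target,2\eps)$, and in particular $\phi(\target')=0$. Let $\mu(p)=\tfrac{1+c}{2}|p\target'|$, which is a continuous open $c$-prediction for $\target'$, and define
\[
   \lambda'(p) \;=\; \phi(p)\,\lambda(p)\;+\;(1-\phi(p))\,\mu(p).
\]

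Finally I would verify the four required properties. Continuity of $\lambda'$ is immediate. On $X$, $\phi=1$ gives $\lambda'=\lambda$. At $\target'$, $\phi(\target')=0$ and $\mu(\target')=0$ yield $\lambda'(\target')=0$. For $p\neq \target'$, if $\phi(p)=0$ then $\lambda'(p)=\mu(p)$ lies strictly between $|p\target'|$ and $c|p\target'|$; if $\phi(p)>0$ then $p\in U$ so both $\lambda(p)$ (by the previous paragraph) and $\mu(p)$ lie strictly in $(|p\target'|,c|p\target'|)$, and hence so does their convex combination. Thus $\lambda'\in \Lambda(\target',c)$ and is continuous and open. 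The main obstacle is the uniform-margin step: once compactness provides $\delta$ and $\eps$ independent of $\target'$ that keep the strict inequalities for $\lambda$ intact after perturbing $\target$, the rest is a routine interpolation; without this uniformity, one would have to tailor the neighborhood and the cutoff separately for each $\target'$.
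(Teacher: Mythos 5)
Your proof is correct, and it takes a genuinely different route from the paper's. The paper encloses $X$ in the spherical shell $S(\target,A,B)$ with $A=\min_{p\in X}|p\target|$ and $B=\max_{p\in X}|p\target|$, keeps $\lambda$ unchanged on that entire shell, and then extends the restrictions of $\lambda$ to the two bounding spheres radially from $\target'$ into the inner ball and the outer region; this requires a separate extension tool (scaling $\lambda$ linearly along rays from $\target'$) and a gluing argument at the two spheres. You instead work in a small tubular neighborhood $U$ of $X$, use compactness to get a uniform strict margin $\eta$ there, and blend $\lambda$ with the canonical open prediction $\mu(p)=\tfrac{1+c}{2}|p\target'|$ via a cutoff function, relying on the observation that the admissible values at each $p\neq\target'$ form the open interval $\bigl(|p\target'|,\,c\,|p\target'|\bigr)$, which is convex. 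Your construction is more modular and elementary (no radial extension machinery, no case analysis over three regions), and it adapts more readily to targets or domains where radial projection is unnatural; the paper's construction has the minor advantage of leaving $\lambda$ untouched on the whole shell $S(\target,A,B)$ rather than only on $X$, though the statement does not require this. Two small points worth tightening: with $\phi(p)=\max\{0,1-\operatorname{dist}(p,X)/\delta\}$ the closed support may meet the boundary of $U$, but all you actually use is that $\phi(p)>0$ implies $p\in U$, which does hold; and the openness of $\mu$ needs $c>1$, which is harmless since for $c=1$ no open $c$-prediction exists and the hypothesis is vacuous (the paper's proof has the same implicit assumption).
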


Let us explain the necessity to 
assume that the prediction is \emph{open}. This is not because the problem becomes easy otherwise. In general, without this restriction, predictions can still be such that there is an ``uncertainty ball'' which the target can lie anywhere in. However, it could also happen that for a \emph{finite} set $X$ of points
we have that $\cap_{p\in X}S(p,\lambda(p)/c,\lambda(p))$ is a single point,
which must then be the target $\target$. (This cannot happen when all predictions are open.) In such a case, if the searcher passes 
through all points of $X$, they have not reached the target but already have 
deduced its position. 
We further assume that $X$ is compact and $\lambda$ is continuous 
to avoid that such a deduction can be made when $X$ is infinite. 

\begin{proof}[Proof of Proposition~\ref{pro:flexibility}]
	We first provide a simple extension tool.
	Let $S$ be a sphere and let $\target'$ be a point in the interior of the ball bounded by $S$.
	Assume that we have a continuous and open $c$-prediction $\lambda\colon S\rightarrow \RR$
	for the target $\target'$, but defined only on $S$.
	Then there is a continuous and open $c$-prediction $\lambda'$ 
	for the target $\target'$ 
	defined on the whole $\RR^d$ that extends $\lambda'$, that is, 
	$\lambda(p)=\lambda'(p)$ for all $p\in S$.
    To show this, we introduce the following notation:
	for each point $p\neq\target'$, let $\rho(\target',p,S)$ be the unique point 
	where the ray with origin at $\target'$ through $p$ intersects the sphere $S$.
	See Figure~\ref{fig:flexibility}, left.
    We then define 
	$\lambda'(p):=\frac{|p\target'|}{|\rho(\target',p,S)\target'|} \cdot \lambda(\rho(t',p,S))$
	for all $p\in \RR^d\setminus\{ \target'\}$, and $\lambda'(\target'):=0$.
	It is easy to see that $\lambda'$ is an open $c$-prediction for $\target'$
	because at each ray	from $\target'$ it is just a linear function.
	It is also easy to see that $\lambda'$ is continuous because $\lambda$ is continuous.
	Finally, it is clear from the definition that $\lambda'$ extends $\lambda$
	because for each $p\in S$ we have $p=\rho(\target',p,S)$.

	\begin{figure}\centering
		\includegraphics[page=6,scale=.75]{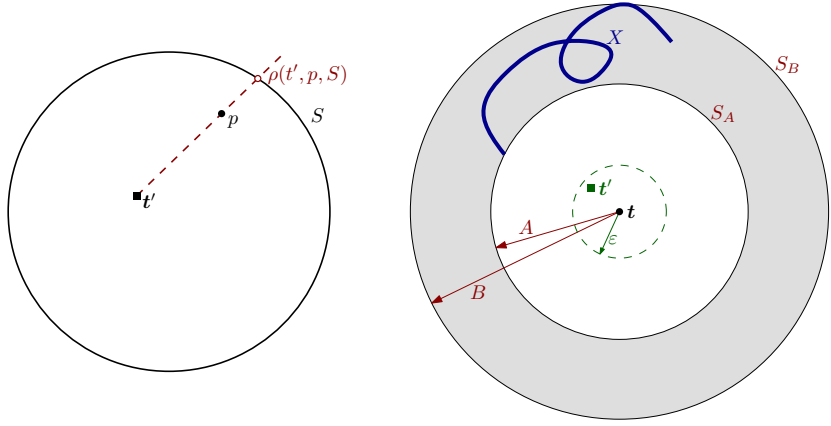}
		\caption{Proof of Proposition~\ref{pro:flexibility}. Left: definition of $\rho(\target',p,S)$
			to extend $\lambda$ defined on $S$. Right: Main part of the proof.}
		\label{fig:flexibility}
	\end{figure}

	We now turn our attention to the main statement.
	Fix a $c$-prediction $\lambda$ for the target $\target$	that is open and continuous.
	Figure~\ref{fig:flexibility}, right, may help to follow the notation.
	Since $X$ is compact and the distance function is continuous, 
	we can define the values
	\[
		A:= \min \{ |p\target|\mid p\in X \} 
		~~\text{ and }~~
		B:= \max \{ |p\target|\mid p\in X \}.
	\]
	They are both positive because $\target\notin X$, 
	and we have $X\subseteq S(\target,A,B)$.
	Because $\target\notin S(\target,A,B)$ and $S(\target,A,B)$ is compact,
	we can also define the values 
	\[
		\eps':=\min \{ \lambda(p)-|p\target| \mid p\in S(\target,A,B) \} 
		~~\text{ and }~~
		\eps'':= \min \{ |p\target| - \lambda(p)/c \mid p\in S(\target,A,B) \}.
	\]
	Because $\lambda$ is open, both values $\eps'$ and $\eps''$ are strictly positive. 
	Let $\eps=\min \{ \eps',\eps''\}/2$, which is also strictly positive.
	We thus have
	\[
		\forall p\in S(\target,A,B):~~~ 
		|p\target|+\eps < \lambda(p) < c \cdot (|p\target| - \eps).
	\]
	We will show that all the points in the ball $B(\target,\eps)$ 
	are possible targets. This is natural because each of those points
	is consistent with the information we got:
	the disk $B(\target,\eps)$ is contained in the spherical shell
	$S(p,\lambda(p)/c,\lambda(p))$ for all $p\in X$ (and actually 
	all $p\in S(\target,A,B)$).
	
	Note that for each $p\in S(\target,A,B)$ and each $\target'\in B(\target,\eps)$
	we have
	\[ 
		\lambda(p) ~<~ c \cdot (|p\target| - \eps) ~\le~ 
			c \cdot (|p\target'|+|\target' \target| - \eps) ~\le~
			c \cdot (|p\target'|)
	\]
	and
	\[ 
		\lambda(p) ~>~ |p\target| + \eps ~\ge ~ |p\target'|-|\target'\target| +\eps
			~\ge~ |p\target'|.
	\]
	Therefore the restriction of $\lambda$ to the spherical shell $S(\target,A,B)$
	is also an open and continuous $c$-prediction for $\target'$.
	
	Let $S_A$ and $S_B$ be the inner and outer spheres bounding
	the spherical shell $S(\target,A,B)$, respectively.
	Let $\lambda_A$ and $\lambda_B$ be the restriction of $\lambda$ 
	to $S_A$ and $S_B$, respectively. They are open and continuous because
	they are a restriction of $\lambda$.

	Fix any point $\target'\in B(\target,\eps)$.
	As discussed at the beginning of the proof,
	we can extend $\lambda_A$ to an open and continuous $c$-prediction 
	$\lambda'_A$ for the target $\target'$. 
	Similarly, we can extend $\lambda_B$ to an open and continuous 
	$c$-prediction $\lambda'_B$ for the target $\target'$.
	
	Finally, we can define the function $\lambda''\colon \RR^d\rightarrow \RR$
	by 
	\[
		\lambda''(p) ~=~ \begin{cases}
			\lambda'_A(p), &\text{if $p\in B(\target,A)$, ~~~[inside]}\\
			\lambda(p), &\text{if $p\in S(\target,A,B)$, ~~~[spherical shell]}\\
			\lambda'_B(p), &\text{if $p\notin B(\target,B)$ ~~~[outside].}\\
			\end{cases}
	\]
	This function $\lambda''$ is continuous because $\lambda'_A(p)=\lambda(p)$ 
	for all $p\in S_A$, $\lambda'_B(p)=\lambda(p)$ for all $p\in S_B$,
	and because $\lambda,\lambda'_A,\lambda'_B$ are continuous.
	It is also open and a $c$-prediction because each of the functions
	$\lambda,\lambda'_A,\lambda'_B$ has these properties.
	Finally, it is obvious that $\lambda(p)=\lambda''(p)$ for all $p\in X$
	because $X\subseteq S(\target,A,B)$.	
\end{proof}

Proposition~\ref{pro:flexibility} implies that
any search strategy that reaches the target for all $c$-predictions, 
where $c>1$, cannot consist of a finite number of pieces.
Indeed, for any $c$-prediction $\lambda$ that is open and continuous,
until we do not reach the target, there is a ball $B_\eps$ of radius $\eps>0$ 
such that the target may be any point of $B_\eps$, and the information
collected so far cannot distinguish among the possible targets. 
In particular, an adversary could change from one continuous and open
$\lambda\in \Lambda(\target,c)$ to another continuous and open
$\lambda'\in \Lambda(\target',c)$, for a suitable $\target'\neq \target$,
at any time before reaching the target because $\lambda$ and $\lambda'$ 
agree on the points that have been explored so far and are indistinguishable.
In particular, any search path needs to have an infinite number of pieces
because an adversary can change the target at any time during the search. 
Finally, knowing that the prediction $\lambda$ is continuous does not help.

Note also that, because of Lemma~\ref{lem:bound},
we cannot cover all those candidate targets of $B_\eps$ 
with a curve of bounded length, unless we collect additional information.
This means that we cannot have a search strategy that ignores
the prediction $\lambda$ after some time, because from that moment
all points in a ball of positive radius keep being possible targets,
and we cannot visit all of them with a curve of bounded length
that neglects additional information.

The following property shows that from the data
collected at any given point, we can infer another $c$-prediction
function using the triangular inequality.
Moreover, the part that we infer is $1$-Lipschitz and thus continuous.
The following result is not used anymore in our work, but it seems
a useful remark for future research.

\begin{lemma}
\label{lem:Lipschitz}
	Let $X\subset \RR^d$ be any non-empty set of points and $\lambda$ a $c$-prediction 
	for target $\target$.
	Then, the function $\tilde\lambda$ defined by 
	\[
		p\in \RR^d ~\mapsto~ \tilde\lambda(p)= \inf\{ |pp'|+\lambda(p') \mid  p'\in X \}
	\]
	is $1$-Lipschitz and the function
	\[
		p\in \RR^d ~\mapsto~ \min \{ \lambda(p), \tilde\lambda(p)\}
	\]
	is a $c$-prediction for $\target$.
\end{lemma}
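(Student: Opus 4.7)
The plan is to verify both claims by direct application of the triangle inequality, which is essentially all that is needed since the definition of $\tilde\lambda$ is tailored to it.

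First I would prove that $\tilde\lambda$ is $1$-Lipschitz. For arbitrary $p,q\in\RR^d$ and any $p'\in X$, the triangle inequality gives $|pp'|+\lambda(p') \le |pq|+|qp'|+\lambda(p')$. Taking infimum over $p'\in X$ on the right-hand side yields $\tilde\lambda(p)\le |pq|+\tilde\lambda(q)$. Swapping the roles of $p$ and $q$ gives the reverse inequality, and hence $|\tilde\lambda(p)-\tilde\lambda(q)|\le |pq|$. (Here one has to remark briefly that if $\tilde\lambda(p)$ or $\tilde\lambda(q)$ were $+\infty$ there would be nothing to worry about, but since $X$ is nonempty both infima are finite.)

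Next I would show that $\mu(p):=\min\{\lambda(p),\tilde\lambda(p)\}$ satisfies the two inequalities required by~\eqref{eq:condition1}. The upper bound $\mu(p)\le c\cdot |p\target|$ is immediate, since $\mu(p)\le \lambda(p)\le c\cdot |p\target|$. For the lower bound $\mu(p)\ge |p\target|$ I would argue that both terms in the minimum dominate $|p\target|$. The bound $\lambda(p)\ge |p\target|$ is given. For $\tilde\lambda(p)$, note that for every $p'\in X$ one has, by the triangle inequality and the fact that $\lambda$ is a $c$-prediction for $\target$,
\[
    |pp'|+\lambda(p') ~\ge~ |pp'|+|p'\target| ~\ge~ |p\target|.
\]
Taking the infimum over $p'\in X$ gives $\tilde\lambda(p)\ge |p\target|$, so $\mu(p)\ge |p\target|$.

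There is no real obstacle here; the proof reduces entirely to two invocations of the triangle inequality, one for Lipschitzness and one for the validity of the lower bound. The only subtlety worth highlighting is that the minimum with $\lambda$ (rather than just $\tilde\lambda$ on its own) is what preserves the upper bound $c\cdot|p\target|$, since $\tilde\lambda(p)$ alone could in principle exceed this bound when $p$ is far from $X$.
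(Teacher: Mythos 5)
Your proof is correct and follows essentially the same route as the paper's: both claims reduce to two applications of the triangle inequality, with the lower bound $\tilde\lambda(p)\ge|p\target|$ obtained exactly as in the paper. Your Lipschitz argument, which bounds $\tilde\lambda(p)\le|pq|+|qp'|+\lambda(p')$ and takes the infimum over $p'$ directly, is a slightly cleaner packaging of the paper's $\eps$-approximate-minimizer argument, but the substance is identical.
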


\begin{proof}%[Proof of Lemma~\ref{lem:Lipschitz}]
	The idea is that for any $p'\in X$, 
    we can replace the prediction at $p$ by $\min\{|pp'|+\lambda(p'), \lambda(p) \}$,
    which is something we can deduce because of the triangular inequality.

	We first show that the function $\tilde\lambda$ is $1$-Lipschitz.
	Note that the minimum is not necessarily attained because we do not
	assume anything about the prediction function $\lambda$ or the set $X$.
	For each $\eps>0$ and each $p\in \RR^d$, 
    there exists a point $p'_\eps\in X$ such that
	\[
		|pp'_\eps|+\lambda(p'_\eps) ~\le~ \tilde\lambda(p) + \eps.
	\]
	Then
	\begin{align*}
		\forall p,q\in \RR^d:~~~~
		  \tilde\lambda(q) - \tilde\lambda(p) & \le
		  \big( |qp'_\eps| + \lambda(p'_\eps)\big) - \big( |pp'_\eps| + \lambda(p'_\eps) -\eps \big)\\
		  &= |qp'_\eps| - |pp'_\eps| +\eps \\
		  &\le |qp|+\eps.
	\end{align*}
	Because of symmetry we obtain that
	\[
		\forall \eps>0, \forall p,q\in \RR^d:~~~~
		  |\tilde\lambda(p) - \tilde\lambda(q)| \le |pq|+\eps.
	\]
	This implies that 
	\begin{align*}
		\forall p,q\in \RR^d&:~~~~
		  |\tilde\lambda(p) - \tilde\lambda(q)| \le |pq|
	\end{align*}	
	and therefore $\tilde\lambda$ is $1$-Lipschitz.
	Note that, in general, $\tilde\lambda$ is \emph{not} a $c$-prediction
	function because, if the target does not belong to $X$,
	it is always non-zero.

	Next we show that the function 
	$\mu(p):=\min \{ \lambda(p), \tilde\lambda(p)\}$
	is a $c$-prediction for $\target$.
	For this we use that $\lambda$ is a $c$-prediction for $\target$.
	On the one hand we have
	\[
		\forall p\in \RR^d:~~~ \mu(p) ~=~ 
		\min\{ \lambda(p), \tilde\lambda(p) \} 
		~\le~ \lambda(p) ~\le~ c\cdot |p\target|.
	\]
	On the other hand
	\begin{align*}
		\forall p\in \RR^d:~~~ \tilde\lambda(p) ~&=~ 
			\inf\{ |pp'|+\lambda(p') \mid  p'\in X \} 
			~\ge~ \inf\{ |pp'|+|p'\target| \mid  p'\in X \} \\
			~&\ge~ \inf\{ |p\target| \mid  p'\in X \} 
			~\ge~ |p\target|,
	\end{align*}
	and therefore
	\begin{align*}
		\forall p\in \RR^d:~~~ \mu(p) ~=~ 
			\min \{ \lambda(p), \tilde\lambda(p) \} 
			~\ge~ \min\{ |p\target|, |p\target| \} ~=~ |p\target|.
	\end{align*}
	This shows that $\mu(\cdot)$ is a $c$-prediction for $\target$, as claimed
\end{proof}

The relevance of the result is that at any given point $p$ we get two predictions,
$\lambda(p)$ and the value $\tilde\lambda(p)$ that we can infer. 
When they are distinct, we can infer more information, which can help the search.
However, as we mentioned, we do not exploit this in the strategies
we describe.

Finally, we note that the case of $c=1$ is easy.

\begin{proposition}[Case $c=1$]
	For $1$-predictions in $\RR^d$,
	there is a search strategy with competitive ratio $1+\eps$ for each $\eps>0$.
\end{proposition}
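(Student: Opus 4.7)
The plan is to exploit the fact that with $c=1$ the value $\lambda(p)$ is the \emph{exact} Euclidean distance $|p\target|$, so after a small number of well-chosen probes we can recover $\target$ by trilateration and then walk straight to it. Concretely, given a target ratio $1+\eps$, I would query $\lambda(\origin)$ first to learn $r:=|\origin\target|$, then move through $d$ additional carefully placed points before finally heading directly to the recovered target.

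More precisely: pick a small $\delta > 0$ (to be fixed later in terms of $\eps$ and $r$) and let $p_i := \origin + \delta e_i$ for $i=1,\dots,d$, where $e_1,\dots,e_d$ is the standard basis of $\RR^d$. The searcher follows the polygonal path $\origin \to p_1 \to p_2 \to \dots \to p_d$, reading off $\lambda(p_i) = |p_i\target|$ at each vertex. The key observation is that $d+1$ affinely independent points together with their exact distances to $\target$ determine $\target$ uniquely: if $|\origin\target|=|\origin\target'|$ and $|p_i\target|=|p_i\target'|$ for all $i$, expanding the squares and subtracting gives $\langle p_i - \origin,\, \target - \target'\rangle = 0$ for every $i$, and since $\{p_i-\origin\}$ spans $\RR^d$ we conclude $\target=\target'$. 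Thus after the $d$th probe the searcher knows $\target$ exactly and can complete the search by a straight-line segment from $p_d$ to $\target$.

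It remains to bound the length of the path. The initial detour through $p_1,\dots,p_d$ has length at most $C_d\,\delta$ for some constant $C_d$ depending only on $d$ (each consecutive pair is at distance at most $\sqrt{2}\,\delta$), while the final segment has length $|p_d\target| \le |p_d\origin| + |\origin\target| = \delta + r$. The total is therefore at most $r + (C_d+1)\delta$, so choosing $\delta := \eps\, r/(C_d+1)$ yields total length at most $(1+\eps)\,r = (1+\eps)\,|\origin\target|$, as required. Note $r$ is known from the very first query $\lambda(\origin)$, so $\delta$ can indeed be chosen online.

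There is no real obstacle beyond verifying the trilateration identity and bounding the detour; the only mild subtlety is ensuring that $\delta$ may be fixed after seeing $\lambda(\origin)$, which is fine because the first query reveals $r$ before any committing movement takes place.
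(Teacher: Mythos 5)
Your proposal is correct and follows essentially the same strategy as the paper: probe the origin plus $d$ nearby points along the coordinate axes at scale $\delta=\Theta(\eps\,r)$, recover $\target$ exactly by trilateration, and walk straight to it. The only cosmetic differences are that you chain the probes $p_1\to\cdots\to p_d$ rather than returning to the origin between them, and you justify uniqueness by the elementary inner-product computation $\langle p_i-\origin,\target-\target'\rangle=0$ instead of the paper's lifting to the paraboloid (you should just also note the trivial case $\lambda(\origin)=0$).
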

\begin{proof}
	As usual, we assume that the search starts at the origin $\origin$.
	When $c=1$, whenever we query for the distance at a point $p$,
	we get $\lambda(p)=|p\target|$ and we conclude that
	the target lies on the $d-1$ sphere  
	centered at $p$ with radius $\lambda(p)$.
	In particular, $\lambda(\origin)=|\origin\target|$.
	
	We have to find a strategy that reaches the target using
	a path of length at most $(1+\eps)|\origin\target|=(1+\eps)\lambda(\origin)$.
	If $\lambda(\origin)=0$, we are already at the target.
	We thus assume that $\lambda(\origin)\neq 0$.

	For each axis,
	we move $\delta=\lambda(\origin)\cdot\eps/(2d)$ along the axis, 
	query the distance
	to the target, and go back to the origin. The target has to lie
	in the intersection of the $d$ spheres centered at the points
	where we queried and also at the sphere centered at the origin.
	The intersection of these $d+1$ spheres is always a single point,
	which must be $\target$.
	The uniqueness of the point can be seen for example using the lifting to 
	the paraboloid in $\RR^{d+1}$ defined by the function 
	$(x_1,\dots,x_d)\in \RR^d \mapsto 
	\left( x_1,\dots,x_d, \sum_{i=1}^d (x_i)^2\right) \in \RR^{d+1}$.
	Each of those spheres corresponds to the intersection
	of a hyperplane with the paraboloid, and the normals of those $d+1$ hyperplanes
	are linearly independent. Therefore, the intersection of those
	hyperplanes is a single point, which by construction must lie on 
	the paraboloid.

	We have walked exactly $(2d)\delta=\eps |\origin \target|$, came back to the origin,
	and we know now the exact position of the target. Walking to the target takes
	additional $|\origin \target|$ length.
\end{proof}

%%%%%%%%%%%%%%%%%%%%%%%%%%%%%%%%%%%%%%%%%%%%%%%%%%%%%%%%%%%%%%%%%%%%%%%%%%%%%%%%%%%%%%%%%%%%%%%%%%%%%%%%%%%%%%%%%%%%%%%%%%%%%%%%%%%%%%%%%%%%%%%%%%%%%%%%%%%%%%%%%%%%%%%%%%%%%%%%%%%%%%%%%%%%%%%%%%%%%%%%%%%%%%%%%%%%%%%%%%%%%%%%%%%%%%%%%%%%%%%%%%%%%%%%%%%%%%%%%%%%%%%%%%%%%%%%%%%%%%%%%%%%%%%%%%%%%%%%%
\section{Upper bound}
\label{sec:upper_bound}

In this section we provide search strategies to reach the target
in $\RR^d$ when we have a $c$-prediction. 
We first provide the key lemma that tells us how to get
a sequence of points whose $\lambda$-values decreases geometrically.
We then provide a strategy when the prediction factor is known,
and then discuss how to handle the case for unknown prediction
factor. In this setting, we adapt the notation so that $c^*$
is the true prediction factor, while $c$ is a guess for the
true prediction factor.
At the end of the section we discuss some extensions.

\begin{figure}\centering
	\includegraphics[page=3,scale=.85]{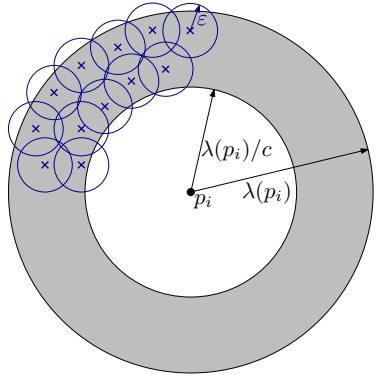}
	\caption{Spherical shell $S(p_i,\lambda(p_i)/c,\lambda(p_i))$ where $\target$ must lie and (part of) a net for the shell.}
	\label{fig:annulus_and_net}
\end{figure}

\begin{lemma}
\label{lem:onestep}
	Assume that we are at point $p_i$ and the prediction factor $c^*$ is perhaps unknown.
	Let $c\ge 1$ be a guess for $c^*$.
	Using a search through a path $\gamma_{i+1}$ of length at most
	$2(5c)^d\cdot \lambda(p_i)$ we get to one of the following outcomes:
	\begin{itemize}
		\item we move from $p_i$ to a point $p_{i+1}$ such that
			$\lambda(p_{i+1})\le \lambda(p_i)/2$, or
		\item we come back to $p_i$ and correctly deduce that $c<c^*$. 
	\end{itemize}
	Moreover, all points of the path $\gamma_{i+1}$ are at distance at most 
	$2\lambda(p_i)$ from $\target$.		
\end{lemma}
\begin{proof}
	Recall that $|p_i\target| \le \lambda(p_i)$.
	Set $\eps=\tfrac{\lambda(p_i)}{2c} \le \tfrac{\lambda(p_i)}{2}$ and
	let $N$ be a $\eps$-net for the ball 
	$B(p_i,\lambda(p_i))$ given by Lemma~\ref{lem:dense}.
	See Figure~\ref{fig:annulus_and_net}.
	Note that 
	\[
		|N|	\le  \left( \frac{5\lambda(p_i)}{2\eps}\right)^d 
			\le (5c)^d.
	\]
	
	Since the target is contained in $B(p_i,\lambda(p_i))$ and 
	$N$ is a $\eps$-net, there is some point $p_*\in N$ such that 
	$|p_*\target|\le \eps = \lambda(p_i)/2c$. 
	If $c^*\le c$, then we have $\lambda(p_*) \le c^*\cdot |p_*\target| \le 
	c^* \cdot \lambda(p_i)/2c \le \lambda(p_i)/2$. 
	If $c^* > c$, then we have no guaranteed useful bound for $\lambda(p_*)$.
	
	We take a path that goes through the points of $N$ in 
	arbitrary order and, at each point of $N$, we query for the prediction $\lambda(\cdot)$.
	We finish the path as soon as we reach some point $q_*\in N$
	such that $\lambda(q_*)\le \lambda(p_i)/2$. If for all the points $p$ of $N$
	we have $\lambda(p)> \lambda(p_i)/2$, then we go back to the point $p_i$.
	This finishes the description of the path $\gamma_{i+1}$.

	In the first case, we set $p_{i+1}=q_*$ and we have moved to
	a point $p_{i+1}$ with $\lambda(p_{i+1})\le \lambda(p_i)/2$.
	When $c^*\le c$, we have to be in this case since the
	point $p_*$ satisfies the stopping condition 
	$\lambda(p_*)\le \lambda(p_i)/2$.
	Thus, if we do not have such a point, we can conclude that 
	$c^*>c$.
	
	To bound the length of $\gamma_{i+1}$, we note that any two points in 
	$B(p_i,\lambda(p_i))$ are at distance at most $2\lambda(p_i)$. We further
	note that the first edge has length at most $\lambda(p_i)$ and, if
	$\gamma_{i+1}$ comes back to $p_i$, also the last edge has length at
	most $\lambda(p_i)$.
	Using the bound of Lemma~\ref{lem:dense} for $|N|$, 
	we get that the path $\gamma_{i+1}$ has length at most 
	\[
		\lambda(p_i) + (|N|-1)\cdot(2\lambda(p_i)) + \lambda(p_i) 
			\le  2\cdot |N|\cdot \lambda(p_i)
		 = 2(5c)^d \cdot \lambda(p_i).
	\]
	Finally, we note that the whole path $\gamma_{i+1}$ is contained in $B(p_i,\lambda(p_i))$,
	which is contained in $B(\target,2\lambda(p_i))$ because 
	$\target\in B(p_i,\lambda(p_i))$. 
	
	Note: It is known~\cite{few1955shortest} that for any set of $n$ points 
	in the $d$-dimensional ball $B(p,r)$ there is a tour of length 
	$r\cdot O\big(n^{\frac{d-1}{d}}\big)$ visiting them. 
	This implies that we can also use a path $\gamma_{i+1}$ of length 
	$\lambda(p_i) \cdot O\big( |N|^{\frac{d-1}{d}}\big) = 
	O((5c)^{d-1})\cdot\lambda(p_i)$. With this, the dependency on $c$ is slightly better
	at the expense of having more ugly-looking constants hidden
	in the $O$-notation.
\end{proof}

\begin{theorem}[Known $c$]
\label{thm:knownc}
	Consider the search with predictions problem in $\RR^d$ where
	the prediction factor $c^*>1$ is known.
	There is a search strategy to reach the target 
	with competitive 
	ratio $4\cdot 5^d\cdot (c^*)^{d+1}$.
\end{theorem}
\begin{proof}
	Let $p_0=\origin$ be the starting point and recall that
	$\lambda(p_0) \le c^* \cdot |\origin \target|$.
	
	For $i=0,1,2\dots$ iteratively, we use Lemma~\ref{lem:onestep} 
	with the guessed prediction factor $c=c^*$ to obtain a path $\gamma_{i+1}$.
	As the prediction factor is correct, we always have the 
	outcome in the first item: 
	$\gamma_{i+1}$ finishes at a point $p_{i+1}$ 
	with $\lambda(p_{i+1})\le \lambda(p_i)/2$.
	It follows by induction that for each $i\in \NN\cup \{ 0 \}$ we have
	$\lambda(p_i)\le \lambda(p_0)/2^i$ and therefore 
	\[
		\len(\gamma_{i+1}) ~\le~ 2(5c^*)^d\cdot \lambda(p_i) 
		~\le~ 2(5c^*)^d\cdot \frac{\lambda(p_0)}{2^i}.
	\]
	Let $\gamma$ be the concatenation of the paths $\gamma_1,\gamma_2,\dots$.
	Then 
	\begin{align*}
		\len(\gamma) ~&=~ \sum_{i=0}^\infty \len(\gamma_{i+1}) 
			~\le~ \sum_{i=0}^\infty \left( 2(5c^*)^d\cdot \frac{\lambda(p_0)}{2^i} \right)
			~=~ 4(5c^*)^d \cdot \lambda(p_0) \\ 
			&\le~ 4 \cdot 5^d (c^*)^{d+1} \cdot |\origin\target|.
	\end{align*}
	The path makes an infinite number of straight-line moves.
	Since for each $i\in \NN$, the suffix of the path $\gamma$ after $p_{i+1}$
	is at distance at most $2\cdot \lambda(p_{i+1}) \le 
	2 \cdot \lambda(p_0)/2^{i+1}= \lambda(p_0)/2^i$ 
	from $\target$, Lemma~\ref{le:continuous} implies that $\target$ 
	is the endpoint of the path $\gamma$.
\end{proof}

\begin{theorem}[Unknown $c$]
\label{thm:unknownc}
	Consider the search with predictions problem in $\RR^d$ where
	the prediction factor $c^*$ is unknown.
	There is a search strategy to reach the target with competitive 
	ratio $6\cdot 10^d \cdot (c^*)^{d+1}$.
\end{theorem}
\begin{proof}
	The basic idea is using an exponential search for the constant $c$.
	At each step we use Lemma~\ref{lem:onestep} to either move to 
	a point with smaller predicted distance to the target or to detect 
	that our guess for $c$ is too small and double it.
	Index $j$ parameterizes the current guess $c=2^j$, 
	and $p^{(j)}_i$ denotes a point during that guess.
	At each step we will increase either $i$ or $j$.

	We start setting $j=1$, $i=0$ and $p^{(j)}_i=p^{(1)}_0=\origin$. 

	From the current point $p^{(j)}_i$, 
	we use Lemma~\ref{lem:onestep} with the guessed prediction factor 
	$c=2^j$ to obtain a path $\gamma^{(j)}_{i+1}$.
	If the outcome is given by the first item of Lemma~\ref{lem:onestep},
	then we get to a point $p^{(j)}_{i+1}$ with 
	$\lambda(p^{(j)}_{i+1})\le \lambda(p^{(j)}_i)/2$; in
	this case we increase $i$.
	If the outcome is given by the second item of Lemma~\ref{lem:onestep},
	then we get back to $p^{(j)}_i$; in this case
	we set $p^{(j+1)}_i=p^{(j)}_i$, increase $j$ and,
	from this point on, we will use the new guessed prediction factor $2^j$,
	which is twice larger than before.
	See Figure~\ref{fig:unknownc} for an schematic view of the sequence
	of points.
	
	\begin{figure}[htb]
	\centering
		\includegraphics[page=4,scale=1]{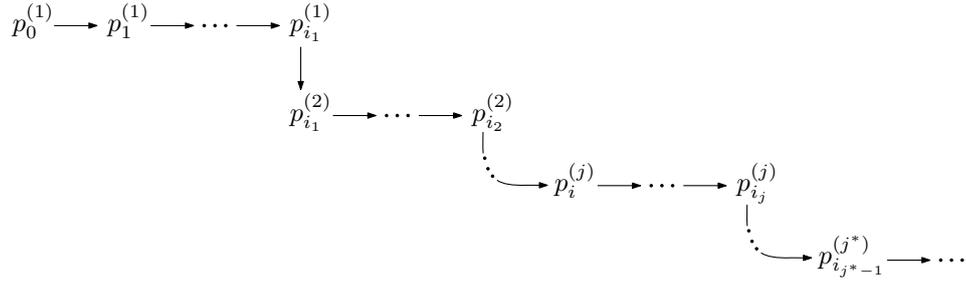}
		\caption{Visualizing the sequence of points 
			$\Big( \bigl( p^{(j)}_i \bigr)_{i\in I_j}\Bigr)_{j\in J}$.
			When increasing $i$ we move right, when increasing $j$ we move down.
			Here, we are using $k=\max(J)$
			and $i_j=\max(I_j)$ for all $j\in J$.}
		\label{fig:unknownc}
	\end{figure}
	
	Let $k$ be the largest value that index $j$ takes through the procedure.
	Note that $k\le \lceil \log_2 c^* \rceil$
	since $2^{\lceil\log_2 c^* \rceil}$ is an overestimate to $c^*$.
	If we arrive to $k= \lceil \log_2 c^* \rceil$, from that
	point on we will always extend the search path using the outcome
	in the first item of Lemma~\ref{lem:onestep}.
	Set $J=\{ 1,\dots, k\}$, which is the set of values that index $j$ takes
	through the procedure.

	For each $j\in J$, let $I_j$ be the set of indices $i$
	such that the point $p^{(j)}_i$ is defined and let $i_j=\max(I_j)$.
	Note that $i_k$ is undefined because $I_k$ is infinite,
	but $i_j$ is defined for all $j<k$ because $I_j$ is finite for all $j<k$.
	By construction, the index $i_j$ is the first element of $I_{j+1}$,
	for all $j<k$. 
	It may happen that, for some $j<k$, the set $I_j$ contains a single element.
	This happens when $j$ is increased in successive steps and thus 
	$p^{(j)}_i=p^{(j+1)}_i=p^{(j+2)}_i$.

	It follows by induction that, for all $j\in J$ and all $i\in I_j$, we have
	$\lambda(p^{(j)}_i)\le \lambda(p^{(1)}_0)/2^i = \lambda(\origin)/2^i$. 
	Note that this bound is independent of the index $j$. Therefore
	\[
		\forall j\in J, ~ i\in I_j:~~~
			\len(\gamma^{(j)}_{i+1})\le 2(5\cdot 2^j)^d\cdot \lambda(p^{(j)}_i)
			\le 2(5\cdot 2^j)^d \cdot \frac{\lambda(\origin)}{2^i}.
	\]
	
	Let $\gamma$ be the concatenation of the paths $\gamma^{(j)}_{i+1}$,
	in the same order as we constructed them: first $\gamma^{(1)}_{i+1}$ for 
	increasing $i\in I_1$, then $\gamma^{(2)}_{i+1}$ for increasing $i\in I_2$,
	and so on until we reach the infinite sequence $\gamma^{(k)}_{i+1}$ 
	for increasing $i\in I_k$.
	We then have
	\begin{align*}
		\len(\gamma) ~&=~ 
		\sum_{j=1}^{k} \sum_{i\in I_j}\len(\gamma^{(j)}_{i+1}) 
		~\le~ \sum_{j=1}^{k} \sum_{i\in I_j}\left( 2(5\cdot 2^j)^d\cdot \frac{\lambda(\origin)}{2^i}\right).
	\end{align*}
	Using that for all $j<k$ the sets $I_j$ and $I_{j+1}$ have only $i_j=\max (I_j)$
	in common, that $2^j\le c^*$ for all $j<k$, that $2^{k}< 2c^*$, and that $\lambda(\origin) \le c^* \cdot |\origin \target|$,
	we get
	\begin{align*}
		\len(\gamma) ~&\le~ \sum_{j=1}^{k-1} \left( 2(5\cdot 2^j)^d\cdot \frac{\lambda(\origin)}{2^{i_j}} \right)+ \sum_{i=0}^\infty \left( 2(5\cdot 2c^*)^d\cdot \frac{\lambda(\origin)}{2^i}\right)
		\\ &\le~ 
			2\cdot 5^d\cdot \lambda(\origin) \cdot \sum_{j=1}^{k-1} (2^d)^j + 4(10 c^*)^d\cdot \lambda(\origin)
		\\ &=~ 
			2\cdot 5^d\cdot \lambda(\origin) \cdot \frac{(2^d)^k- 2^d}{2^d-1} + 4(10 c^*)^d\cdot \lambda(\origin)
		\\ &\le~ 
			2\cdot 5^d\cdot \lambda(\origin) \cdot \frac{(2c^*)^d - 2^d}{2^d-1} + 4(10 c^*)^d\cdot \lambda(\origin)
		\\ &\le~ 
			2\cdot 5^d\cdot \lambda(\origin) \cdot 2(c^*)^d + 4(10 c^*)^d\cdot \lambda(\origin)
		\\ &\le~ 
			4\cdot 5^d\cdot (c^*)^d \cdot \lambda(\origin) \cdot (1+2^d)
		\\ &\le~ 
			4\cdot 5^d\cdot (c^*)^d \cdot \lambda(\origin) \cdot (\tfrac{3}{2}\cdot 2^d )
		\\ &\le~ 
			6\cdot 10^d \cdot (c^*)^{d+1} \cdot |\origin\target|.
		\end{align*}

	The path makes an infinite number of straight-line moves.
	Since for each $i\in I_k$, the suffix of the path $\gamma$ after $p^{(k)}_{i+1}$
	is at distance at most $2\cdot \lambda(p^{(k)}_{i+1}) \le 
	2 \cdot \lambda(\origin)/2^{i+1}= \lambda(\origin)/2^i$ 
	from $\target$, Lemma~\ref{le:continuous} implies that $\target$ 
	is the endpoint of the path $\gamma$.
\end{proof}

\subsection{Extensions}
\label{sec:Extensions}
%We explain here other settings where a similar paradigm works.
%This discussion assumes familiarity with the proofs of 
%Theorems~\ref{thm:knownc} and~\ref{thm:unknownc}.

First, the same approach works for spaces of bounded doubling dimension
as long as there is a concept of path to connect points such that
the length of the path is the same as the distance between the points. 
We could also change the setting so that the cost
of moving from one point to another is the distance between the points. 
For this, one has to use $\eps$-nets in spaces of bounded 
doubling dimension~\cite{Har-PeledM06}.

When the target is known to have integral coordinates, we can modify 
the search so that it has a finite number of segments.
Indeed, as soon as we reach a point $p_i$ such that $\lambda(p_i)< 1/2$,
we know that the target is at distance smaller than $1/2$ from $p_i$ and
there is a unique point with integral coordinates in $B(p_i,\lambda(p_i))$.
We can then just move to that point. A similar approach works when
the target is known to have coordinates with bounded resolution by scaling
the setting. In the case where we can detect the target when we are within a given distance $\delta$, we can finish the search when we reach a point $p_i$ with $\lambda(p_i) \leq \delta$, thus, also bounding the
number of steps (which will depend on the initial estimate $\lambda(p_0)$).

Our strategies work also in the case where predictions may additionally underestimate the actual distance to the target by, say, some factor $c' \leq 1$, i.e., $c' \cdot |p\target|\le \lambda(p)$. Then, when both factors $c, c'$ are known, this is equivalent to scaling up the prediction by $1/c'$ to get a new one with factor $c/c'$. When the factors are unknown, the strategy in Theorem~\ref{thm:unknownc} still visits points with geometrically decreasing predictions and since for each point, say $p_i$, we now have that $|p_i\target| \leq \lambda (p_0)/(2^i \cdot c')$ the path converges to the target (as $c'$ is constant). %\note{P: Not sure if this is really needed..}

One can consider the version when the target is an unknown $k$-flat $F$ 
in $\RR^d$. The same strategies work, 
also if we do not know the prediction factor nor the dimension, $k$, of the flat.
Indeed, we are constructing a path as a concatenation of segments
such that the distance to $F$ gets arbitrarily small for each suffix of the path.
Formally, for each $\eps>0$, there is a suffix of the path such that all
the points on the path are at distance $\eps$ from $F$.
Since the path is continuous and has bounded length, the endpoint of the path
has to be at distance $0$ from $F$.

Finally, note that for small $d$ there are tighter bounds on the size
of $\eps$-nets translating to better constants in our approach.
For $c\approx 1$ one can also exploit that we need an $\eps$-net of the spherical 
shell $S(\origin,r/c,r)$, which is much smaller than the whole ball $S(\origin,r)$. However, this improvement is expected to be small since even the $\eps$-nets
for spheres are not much better.

%%%%%%%%%%%%%%%%%%%%%%%%%%%%%%%%%%%%%%%%%%%%%%%%%%%%%%%%%%%%%%%%%%%%%%%%%%%%%%%%%%%%%%%%%%%%%%%%%%%%%%%%%%%%%%%%%%%%%%%%%%%%%%%%%%%%%%%%%%%%%%%%%%%%%%%%%%%%%%%%%%%%%%%%%%%%%%%%%%%%%%%%%%%%%%%%%%%%%%%%%%%%%%%%%%%%%%%%%%%%%%%%%%%%%%%%%%%%%%%%%%%%%%%%%%%%%%%%%%%%%%%%%%%%%%%%%%%%%%%%%%%%%%%%%%%%%%%%%
\section{Lower bound}
\label{sec:lower_bound}

In this section we provide a lower bound for the search problem
with $c$-predictions. Our lower bounds are meaningful 
when we assume that $c$ is large enough.
The idea is to construct $c$-predictions for infinitely many different
targets with the property that any two of them are indistinguishable, 
unless we are quite close to the target.
For each such prediction $\lambda$, there is a small ball around
the target where the value of $\lambda$ is different from the other
predictions, but most of the other predictions have the same value
on that small ball. Then, the searcher has
to visit \emph{all} the small balls around the targets,
because in the worst case the target is going to be
in the last small ball that is visited.

First, inspired by the technique by Elbassioni, Fishkin and Sitters~\cite{ElbassioniFS09} 
developed for the Euclidean Traveling Salesperson with Neighbourhoods, 
one can show the following bound.
See Dumitrescu and T{\'{o}}th~\cite{DumitrescuT16}
for an improvement over~\cite{ElbassioniFS09} where the same idea is
reused. 
  
\begin{lemma}
\label{lem:TSP}
	For any given radius $\delta\le 1/2$ and dimension $d\ge 1$, 
	consider the infinite family
	of balls 
	\[ 
		\BB ~=~ \big\{ B(p,\delta)\subset\RR^d\mid p\in B(\origin,1/2-\delta)\big\} 
			~=~ \big\{ B(p,\delta)\subset\RR^d\mid B(p,\delta) \subseteq B(\origin,1/2)\big\}.\]
	Each path in $\RR^d$ that contains 
	at least one point from each ball of $\BB$ has length at least
	$\left(\big(\frac{1}{2\delta}-1 \big)^d - 1\right) \cdot \delta \cdot \sqrt{\frac{\pi}{d}}$.
\end{lemma}
\begin{proof}
	For each $r\ge 0$, let $V_d(r)$ denote the volume of the $d$-dimensional 
	ball of radius $r$.
	It is clear that $V_d(r)=r^d\cdot V_d(1)$.
	
	Consider any path $\pi$ that touches each ball of $\BB$ and let $L$ be its length.
	Consider the volume of the Minkowski sum $\pi\oplus B(\origin,\delta)$
	of the (points on the) path $\pi$ and the ball $B(\origin,\delta)$, that is
	the set of points at distance at most $\delta$ from $\pi$.
	Since the path touches each ball of $\BB$,
	the set $\pi\oplus B(\origin,\delta)$ contains the center of each ball of $\BB$.
	This means that $\pi\oplus B(\origin,\delta)$ contains $B(\origin,1/2-\delta)$,
	and therefore we have 
	\[
		\vol \big( \pi\oplus B(\origin,\delta) \big) ~\ge~ 
			\vol\big( B(\origin,1/2-\delta)\big) ~=~ 
			\left(\frac{1}{2}-\delta\right)^d\cdot V_d(1).
	\]
	On the other hand, for each path $\pi$ of length $L$, we have 
	\[
		\vol \big( \pi\oplus B(\origin,\delta) \big) ~\le~ V_d(\delta) + L \cdot V_{d-1}(\delta)
		~=~ \delta^d \cdot V_d(1) +  L \cdot \delta^{d-1} \cdot V_{d-1}(1),
	\]
	with equality if and only if $\pi$ is a straight line. 
	See for example Lemmas~5.1 and~5.2 in~\cite{DumitrescuT16}.
	Therefore we get
	\[
		L ~\ge~ \frac{\left(\frac{1}{2}-\delta\right)^d\cdot V_d(1) - \delta^d \cdot V_d(1)}{\delta^{d-1} \cdot V_{d-1}(1)}
		  ~=~ \left(\Big(\frac{1}{2\delta}-1 \Big)^d - 1\right) \cdot \delta \cdot \frac{V_d(1)}{V_{d-1}(1)}.
	\]
	Using standard formulas for the volume of the $d$-dimensional ball, one gets
	that 
	\[
	\frac{V_d(1)}{V_{d-1}(1)} = \frac{\pi^{d/2}/\Gamma(d/2 + 1)}{\pi^{(d-1)/2}/\Gamma((d-1)/2 + 1)} = \pi^{1/2} \frac{\Gamma(d/2 + 1/2)}{\Gamma(d/2 + 1)} > \pi^{1/2} \frac{\sqrt{2}}{\sqrt{d+\frac{1}{2}}} \ge \sqrt{\frac{\pi}{d}},
	\]
	where we have used the inequality by Kershaw~\cite{Kershaw83} 
	\[
	\forall s\in (0,1),~ \forall x\ge 1:~~~
		\left( x+\frac{s}{2} \right)^{s-1} <~ \frac{\Gamma(x+s)}{\Gamma(x+1)}
	\]
	for $x=d/2$, which is at least $1$ for $d\ge 2$, and for $s=1/2$, which gives
	\[
		\left( \frac{d+1/2}{2} \right)^{-1/2} ~<~ \frac{\Gamma(d/2+1/2)}{\Gamma(d/2+1)}.
	\]	
	(For $d=1$ we can just check that $\Gamma(1) = 1 > \frac{\sqrt{\pi}}{2}=\Gamma(3/2)$.)
    The result follows.
\end{proof}

Next, fix a value $c\ge 2$. For each $\target\in B(\origin,1/2 - 1/c)$, let $\lambda_{\target}$ 
be the function defined as follows;
see Figures~\ref{fig:lower1} and~\ref{fig:lower2} for intuition.

\begin{equation}
	\lambda_{\target} (p) = \begin{cases}
						c \cdot |p\target|, &\text{ if $p\in B(\target, 1/c)$},\\
						1, &\text{ if $p\in B(\origin,1/2)\setminus B(\target, 1/c)$},\\
						2\cdot |p\origin|, &\text{ if $p\notin B(\origin,1/2)$}.
					\end{cases}
\label{eq:lambda}
\end{equation}

\begin{figure}[t]\centering
	\includegraphics[page=5,scale=.66]{all_figures}
	\caption{Parts in the domains for $\lambda_{\target}$ and $\lambda_{\target'}$
		for two targets when $d=2$.}
	\label{fig:lower1}
\end{figure}

\begin{figure}[t]\centering
	\includegraphics[page=7,width=\textwidth]{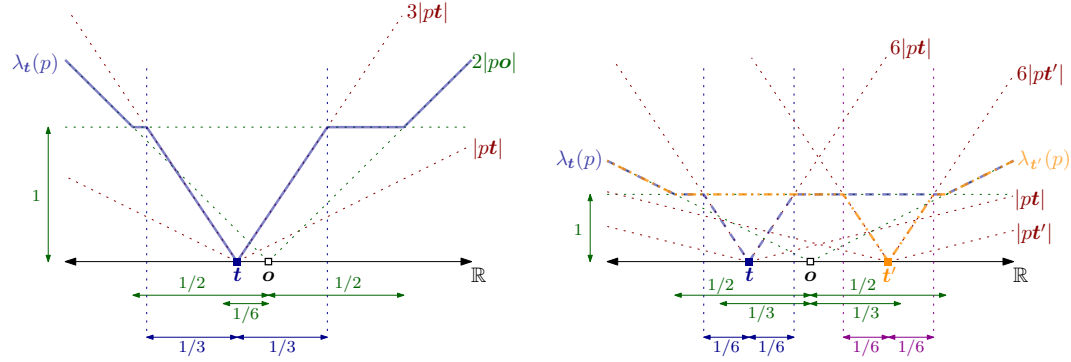}
	\caption{Examples of functions $\lambda_{\target}$ for $d=1$. Note that the axes
		have different scales.
		Left: example for $c=3$; the target has to be at distance at most $1/6$ from the origin. 
		Right: two functions for $c=6$; the target has to be at distance at most $1/3$ from 
		the origin.}
	\label{fig:lower2}
\end{figure}

\begin{lemma}
\label{lem:valid}
	Assume that $c\ge 2$ and $|\origin\target| \le \tfrac{1}{2}- \tfrac{1}{c}$.
	Then the function $\lambda_{\target}$ is a $c$-prediction for the target $\target$.
	Moreover, for any two distinct $\target,\target'$ satisfying the hypothesis,
	the functions $\lambda_{\target}$ and $\lambda_{\target'}$ agree on all
	points outside $B(\target, 1/c)\cup B(\target', 1/c)$.
\end{lemma}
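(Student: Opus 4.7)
The plan is to verify the two assertions by straightforward case analysis on the piecewise definition \eqref{eq:lambda}. First I would check that the three pieces partition $\RR^d$ consistently. The only possible conflict is between the first two cases, which is resolved by showing $B(\target,1/c)\subseteq B(\origin,1/2)$: for $q\in B(\target,1/c)$ the triangle inequality gives $|q\origin|\le |q\target|+|\target\origin|\le 1/c+(1/2-1/c)=1/2$. Hence the cases correspond to a genuine partition into $B(\target,1/c)$, $B(\origin,1/2)\setminus B(\target,1/c)$, and $\RR^d\setminus B(\origin,1/2)$.

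Next I would verify $|p\target|\le \lambda_{\target}(p)\le c\cdot|p\target|$ in each region. In region $1$ both inequalities are immediate because $c\ge 1$. In region $2$ we use $p\notin B(\target,1/c)$ to get $|p\target|\ge 1/c$, hence $c|p\target|\ge 1=\lambda_{\target}(p)$; for the other side, $|p\target|\le |p\origin|+|\origin\target|\le 1/2+(1/2-1/c)<1$. Region $3$ requires the most care. The upper bound $|p\target|\le 2|p\origin|$ follows from $|p\target|\le|p\origin|+|\origin\target|\le |p\origin|+(1/2-1/c)<2|p\origin|$, since $|p\origin|>1/2>1/2-1/c$. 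For the lower bound $2|p\origin|\le c|p\target|$, use $|p\target|\ge|p\origin|-|\origin\target|\ge|p\origin|-(1/2-1/c)$; the required inequality then rearranges to $(c-2)|p\origin|\ge (c-2)/2$, i.e., $|p\origin|\ge 1/2$, which holds in region $3$ (and is where the hypothesis $c>2$ is essential).

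For the second assertion, fix any $p$ outside $B(\target,1/c)\cup B(\target',1/c)$. If $p\in B(\origin,1/2)$, then $p$ lies in region $2$ for both targets, and $\lambda_{\target}(p)=1=\lambda_{\target'}(p)$. If $p\notin B(\origin,1/2)$, both values equal $2|p\origin|$, which does not depend on the target. In either subcase the functions agree.

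The main, and essentially only, obstacle is the tightness of the inequality $(c-2)|p\origin|\ge (c-2)/2$ at the boundary $|p\origin|=1/2$ in region $3$; this is what forces the assumption $c>2$, and everything else is a routine triangle-inequality calculation combined with the hypothesis $|\origin\target|\le 1/2-1/c$.
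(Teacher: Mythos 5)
Your proof is correct and follows essentially the same route as the paper: a case analysis over the three pieces of the definition, using the triangle inequality and the containment $B(\target,1/c)\subseteq B(\origin,1/2)$. The only divergence is in the outer region, where the paper bounds $2|p\origin|\le c|p\target|$ via an auxiliary crossing point $q$ on the sphere $|q\origin|=1/2$, whereas you use the reverse triangle inequality $|p\target|\ge|p\origin|-|\origin\target|$ and reduce to $(c-2)(|p\origin|-1/2)\ge 0$ — a slightly more direct computation that works equally well.
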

\begin{proof}
	Consider any fixed $\target\in B(\origin,1/2- 1/c)$. 
	We have to show that $\lambda_{\target}$ satisfies 
	\[
		\forall p\in \RR^d: ~~~ |p\target| ~\le~ \lambda_{\target}(p) ~\le~ c \cdot |p\target|.
	\]
	We do this by considering points in each part of the domain used to define $\lambda_{\target}$.
	
	The first part is easy as $|p\target| < \lambda_{\target}(p) = c \cdot |p\target|$ for every $p\in B(\target, 1/c)$.
	For the second part we note that:
	\begin{align*}
		\forall p\in B(\origin,1/2)\setminus B(\target, 1/c) &: ~~~
				|p\target| ~\le~ |p\origin| + |\origin \target| ~\le~ 
					\frac{1}{2} + \frac{1}{2} ~=~ 1 ~=~ \lambda_{\target}(p).\\
		\forall p\in B(\origin,1/2)\setminus B(\target, 1/c) &: ~~~ 
				\lambda_{\target}(p) ~=~ 1 ~=~ c\cdot \frac{1}{c} ~\le~ c \cdot |p\target|.
	\end{align*}
	For the last part we first note:
	\begin{align*}
		\forall p\notin B(\origin,1/2) &: ~~~
				|p\target| ~\le~ |p\origin| + |\origin \target| ~\le~ 
					|p\origin| + \frac{1}{2} ~\le ~ 2\cdot |p\origin| ~=~ \lambda_{\target}(p).
	\end{align*}
	For each $p\notin B(\origin,1/2)$, let $q$ be the point 
	where the segment $p\target$ crosses the boundary of $B(\origin,1/2)$.
	Thus, $|q\origin|=1/2$ and, since $B(\target,1/c)\subseteq B(\origin,1/2)$,
	we also have $|q\target|\ge 1/c$. 
	We then have, using that $c\ge 2$,
	\begin{align*}
	\forall p\notin B(\origin,1/2) : ~~~ 
				\lambda_{\target}(p) ~&=~ 2\cdot |p\origin| ~\le~ 
					2\cdot \big(|pq|+|q\origin|\big) ~=~ 2\cdot |pq| + 1 \\
					~&\le~ 2\cdot |pq| + c\cdot |q\target| ~\le~ c\cdot \big(|pq|+|q\target|\big)
					~=~ c\cdot |p\target|.
	\end{align*}
	This covers all cases for the domain of $\lambda_{\target}$ and concludes the proof that
	$\lambda_{\target}$ is a $c$-prediction.
	
	From the definition of $\lambda_{\target}$ it is clear that
	$\lambda_{\target}(p)=\lambda_{\target'}(p)$ for all $p\notin B(\target,1/c)\cup B(\target',1/c)$
	because its value is independent of $\target$ and $\target'$.
\end{proof}

\begin{theorem}\label{thm:lower}
	Assume that $c> 2$.
	There is family of $c$-predictions for targets in $B(\origin,1/2-1/c)$ such that
	any search path in $\RR^d$ that uses $c$-predictions to find the target 
	has length at least $\left(\frac{c-2}{2}\right)^d  \cdot \frac{1}{c} \cdot \min\{\sqrt{\pi/d},1\}$. 
\end{theorem}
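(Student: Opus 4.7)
The plan is to combine an adversary argument with the packing-to-tour bound of Lemma~\ref{lem:TSP}. The crucial structural property of the prediction family in~\eqref{eq:lambda}, provided by Lemma~\ref{lem:valid}, is that two predictions $\lambda_{\target}$ and $\lambda_{\target'}$ differ only inside the two small balls $B(\target,1/c)$ and $B(\target',1/c)$; a searcher thus cannot distinguish $\target$ from $\target'$ until its path enters one of these balls.

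First I would pick a large well-spread family of candidate targets. Applying Lemma~\ref{lem:dense} with $r=1/4$ and $\eps=2/c$ (valid once $2/c\le 1/4$; the small-$c$ regime is absorbed into the loose constants) produces a set $N\subseteq B(\origin,1/4)$ of cardinality at least $(c/8)^d$ whose points are pairwise at distance at least $2/c$. Consequently the balls $\{B(\target,1/c):\target\in N\}$ are pairwise interior disjoint. Each $\target\in N$ satisfies $|\origin\target|\le 1/4\le 1/2-1/c$, so by Lemma~\ref{lem:valid} every $\lambda_{\target}$ is a legitimate $c$-prediction, and any two of them coincide outside the corresponding two small balls.

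The heart of the proof is an adversary strategy against an arbitrary deterministic search algorithm $S$. Maintain a set $A\subseteq N$ of ``alive'' targets, starting with $A=N$, and answer each query $p$ issued by $S$ as follows. If $p$ lies outside $\bigcup_{\target\in A}B(\target,1/c)$, then all $\lambda_\target$ with $\target\in A$ share a common value at $p$, which the adversary returns while leaving $A$ unchanged. If $p$ enters some (necessarily unique) ball $B(\target_0,1/c)$ with $\target_0\in A$ and $|A|\ge 2$, the adversary returns the value $1$, which is consistent with $\lambda_{\target}$ for every $\target\in A\setminus\{\target_0\}$ but not with $\lambda_{\target_0}$, and removes $\target_0$ from $A$. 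Eventually $A$ is reduced to a single surviving $\target^*\in N$; the adversary commits to $\lambda_{\target^*}$, which is consistent with every answer given so far.

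The key deduction is that the path $\gamma$ produced by $S$ must enter $B(\target,1/c)$ for \emph{every} $\target\in N$: if $\gamma$ avoided some $B(\target,1/c)$, the identical sequence of answers would also be consistent with $\lambda_\target$, so $S$ would produce exactly the same $\gamma$ on that input and thus fail to reach $\target$, contradicting correctness of $S$. Having reduced the problem to a path touching $|N|\ge (c/8)^d$ pairwise interior-disjoint balls of radius $1/c$, Lemma~\ref{lem:TSP} yields
\[
\len(\gamma)\ge \bigl(|N|/2^d-1\bigr)\cdot (1/c)\cdot \sqrt{\pi/d},
\]
which after simplification matches the claimed $(c^{d-1}/16^d)\cdot \min\{\sqrt{\pi/d},1\}$ (with the minimum against $1$ taking care of small dimensions, and any lower-order $-1/c$ correction absorbed into the constants). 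The main obstacle is pinning down the constants tightly enough to arrive at exactly this bound; everything else is a fairly standard indistinguishability argument, together with the delicate handling of Zeno-like paths of infinitely many segments, which only requires observing that any finite-length prefix of $\gamma$ triggers only finitely many ball-entry events and that ``entering a ball'' is a topological condition on the image of $\gamma$.
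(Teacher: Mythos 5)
Your proposal follows the paper's proof essentially step for step: the same $\eps$-separated set $N\subseteq B(\origin,1/4)$ with $\eps=2/c$ and $|N|\ge(c/8)^d$, the same family $\{\lambda_{\target}\}_{\target\in N}$ validated by Lemma~\ref{lem:valid}, the same indistinguishability argument forcing the search path to enter every ball $B(\target,1/c)$, and the same application of Lemma~\ref{lem:TSP}. Your adversary formulation (maintaining a set of alive targets and eliminating one each time its ball is entered) is in fact a slightly more careful rendering of the paper's informal ``in the worst case the target is in the last ball visited.''

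The one concrete gap is in the final arithmetic, precisely the point you defer to ``pinning down the constants.'' Lemma~\ref{lem:TSP} alone gives $\bigl(|N|/2^d-1\bigr)\cdot(1/c)\cdot\sqrt{\pi/d}$, and the $-1$ cannot simply be absorbed: with $|N|\ge(c/8)^d$ this quantity equals $\bigl(c^{d}/16^{d}-1\bigr)\cdot(1/c)\cdot\sqrt{\pi/d}$, which is strictly smaller than the claimed $\bigl(c^{d-1}/16^{d}\bigr)\cdot\min\{\sqrt{\pi/d},1\}$ and is even negative (hence vacuous) when $c<16$ and $d$ is small. The paper closes this deficit with an observation your write-up omits: the searcher must not merely touch the last ball $B(\target^*,1/c)$ but actually reach the target $\target^*$ at its center, which contributes an extra additive $1/c$ to the path length. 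Since $1/c\ge(1/c)\cdot\min\{\sqrt{\pi/d},1\}$, this extra term exactly compensates the $-1$ from Lemma~\ref{lem:TSP}, giving
\[
\frac{1}{c}+\left(\frac{|N|}{2^d}-1\right)\cdot\frac{1}{c}\cdot\min\{\sqrt{\pi/d},1\}
~\ge~\frac{|N|}{2^d}\cdot\frac{1}{c}\cdot\min\{\sqrt{\pi/d},1\}
~\ge~\frac{c^{d-1}}{16^d}\cdot\min\{\sqrt{\pi/d},1\},
\]
as stated. With that single addition your proof is complete; everything else matches the paper's argument.
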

\begin{proof}
	For each $\target\in B(\origin,1/2 - 1/c)$, 
	consider the function $\lambda_{\target}$ defined
	in equation~\eqref{eq:lambda}. 
	Lemma~\ref{lem:valid} shows that the function
	$\lambda_{\target}$ is a $c$-prediction for each $\target\in N$.
	Consider the set of prediction functions 
	$\Lambda = \{ \lambda_{\target}\mid \target\in B(\origin,1/2 - 1/c)\}$.	
	Let $\BB =\{ B(\target,1/c)\mid \target\in B(\origin,1/2 - 1/c)\}$. 
	Note that to distinguish between two functions 
	$\lambda_{\target},\lambda_{\target'}\in \Lambda$ we have
	to evaluate them at some point of $B(\target,1/c)\cup B(\target',1/c)$.
	
	Consider any search strategy when the $c$-prediction function 
	is selected from $\Lambda$ by an \emph{adversary}. It
	may use that the $c$-prediction is from $\Lambda$. 
	In particular, we know beforehand that the target $\target$ is 
	a point of $B(\origin,1/2 - 1/c)$, selected by the adversary.	
	We claim that the search path has to visit all the balls of $\BB$.
	Indeed, while there are two distinct points 
	$\target,\target' \in B(\origin,1/2 - 1/c)$ such that
	the balls $B(\target,1/c)$ and $B(\target',1/c)$ are not visited by the search path,
	we have $\lambda_{\target}(p)=\lambda_{\target'}(p)$ for all points $p$ along the path,
	and therefore the information collected cannot discern 
	whether the target is $\target$ or $\target'$.
	Thus, in the worst case, the search path has
	to visit all the balls of $\BB$ but one, to identify which point of 
	$B(\origin,1/2 - 1/c)$ is the target,
	and then still may have to move to that last ball. That is,
	we may assume that in the worst case the adversary chooses $\lambda_{\target}\in \Lambda$ 
	such that $B(\target,1/c)$ is the \emph{last} ball of $\BB$ visited by the search
	strategy. 
	After the searcher deduces at which point of $B(\origin,1/2 - 1/c)$ the target is, they  
	still have to move to the target.
	
	We conclude that the search path has to visit all balls of $\BB$, 
	and for the last ball it still has to travel $1/c$ to the center.
	Lemma~\ref{lem:TSP} (with $\delta=1/c$)
	implies that the search path has length at least
	\begin{align*}
		\frac{1}{c} + \left(\Big(\frac{c}{2}-1 \Big)^d - 1\right) \cdot \frac{1}{c}\cdot \sqrt{\frac{\pi}{d}} 
		~&\ge~
		\frac{1}{c} + \left(\Big(\frac{c-2}{2}\Big)^d - 1\right) \cdot \frac{1}{c} \cdot \min\{\sqrt{\pi/d},1\} \\
		&\ge~
		\left(\frac{c-2}{2}\right)^d  \cdot \frac{1}{c} \cdot \min\{\sqrt{\pi/d},1\}.	\qedhere	
	\end{align*}
\end{proof}

For the \emph{competitive ratio} we have to compare the length of the search path
to the distance to the target, which in the construction is at most $1/2 - 1/c$.
We then obtain the following bound, which is interesting when $d\ge 2$ and $c > 2$.

\begin{corollary}
\label{col:bound}
	Consider the search with predictions problem in $\RR^d$ with prediction factor $c>2$.
	Any search strategy to reach the target has competitive ratio
	at least $\left(\frac{c-2}{2}\right)^{d-1}\cdot \min\{\sqrt{\pi/d},1\}$.
\end{corollary}
\begin{proof}
	Consider the construction of Theorem~\ref{thm:lower}.
	Since the search path has length at least 
	$\left(\frac{c-2}{2}\right)^d  \cdot \frac{1}{c} \cdot \min\{\sqrt{\pi/d},1\}$
	and the target is $\target\in B(\origin,1/2-1/c)$, the competitive ratio is at least
	\begin{align*}
		\frac{\left(\frac{c-2}{2}\right)^d  \cdot \frac{1}{c} \cdot \min\{\sqrt{\pi/d},1\}}{|\origin\target|}
	&\ge~
		\frac{\left(\frac{c-2}{2}\right)^d  \cdot \frac{1}{c} \cdot \min\{\sqrt{\pi/d},1\}}{1/2-1/c} \\
	&=~
		\frac{\left(\frac{c-2}{2}\right)^d  \cdot \frac{1}{c} \cdot \min\{\sqrt{\pi/d},1\}}{\frac{c-2}{2c}}\\
	&=~		
		\left(\frac{c-2}{2}\right)^{d-1} \cdot \min\{\sqrt{\pi/d},1\}.\qedhere
	\end{align*}
\end{proof}

To get a bound that is easier to grasp, let us assume that $c\ge 4$. 
In this case $(c-2)/2\ge c/4$ and we get the following bound.

\begin{corollary}
\label{col:bound2}
	Consider the search with predictions problem in $\RR^d$ with prediction factor $c\ge 4$.
	Any search strategy to reach the target has competitive ratio
	at least $(c/4)^{d-1}\cdot \min\{\sqrt{\pi/d},1\}$.
\end{corollary}

The lower bounds still hold in the case where the
target has integral coordinates or is detected
when we are within a given distance, with a slightly worse
constant in the basis of the exponential function. For this, we can just scale
the input such that the balls in the construction are large
enough to contain points with integer coordinates. Then,
we still have to visit all balls in order to detect which one has the target. 

A similar lower bound holds for randomized search strategies. 
In our construction, the searcher has to visit some balls
in the family $\BB$ of balls of Lemma~\ref{lem:TSP} (with $\delta=1/c$), and 
they can deduce something only when visiting the ball 
whose center is the target. In the worst-case (or adversary) 
model, the target is going to be the center of the last ball of $\BB$ 
that is visited. 
In a randomized setting, we select the target \emph{uniformly} at random 
among all the possible ones. This means that $\target$ is selected uniformly
at random among the points of $B(\origin,1/2-1/c)$.
When the search path has touched half of the balls of $\BB$
(in a measure theoretic sense), with probability
$1/2$ the ball $B(\target,1/c)$ has been touched.
A similar argument as the one used in the proof of Lemma~\ref{lem:TSP}
implies that any search path $\pi$ that touches half of the balls of $\BB$ 
must satisfy the volume inequality
\[
	\vol \big( \pi\oplus B(\origin,\delta) \big) ~\ge~ 
		\frac{1}{2}\cdot \vol\big( B(\origin,1/2-\delta)\big).
\]
With this we obtain a similar lower bound for randomized search
strategies, where the competitive ratio is about $1/4$ worse.

%%%%%%%%%%%%%%%%%%%%%%%%%%%%%%%%%%%%%%%%%%%%%%%%%%%%%%%%%%%%%%%%%%%%%%%%%%%%%%%%%%%%%%%%%%%%%%%%%%%%%%%%%%%%%%%%%%%%%%%%%%%%%%%%%%%%%%%%%%%%%%%%%%%%%%%%%%%%%%%%%%%%%%%%%%%%%%%%%%%%%%%%%%%%%%%%%%%%%%%%%%%%%%%%%%%%%%%%%%%%%%%%%%%%%%%%%%%%%%%%%%%%%%%%%%%%%%%%%%%%%%%%%%%%%%%%%%%%%%%%%%%%%%%%%%%%%%%%%
\section{Conclusions and research directions}

We have introduced and studied the problem of searching for a target in $\RR^d$ under a predictions model in which the searcher is given, at each position they visit, an approximate distance to the target. We presented strategies with competitive ratio $4\cdot 5^d \cdot c^{d+1}$ when the prediction factor $c$ is known, 
and $6\cdot 10^d\cdot c^{d+1}$ when the prediction factor $c$ is unknown.
We also showed a lower bound of $(c/4)^{d-1} \cdot \min\{\sqrt{\pi/d},1\}$ for the competitive ratio of any strategy under this model, assuming that $c\ge 4$, even when the prediction factor $c$ is known.

It will be interesting to revisit linear search and search in concurrent rays under this paradigm. Here, one can of course use the well-known linear search strategies without predictions, but we expect that predictions throughout the whole search will lead to a better competitive ratio.
Previous works~\cite{BoseCD_TCS15,HipkeIKL_DAM99,Lopez-OrtizS01} where a prediction is only given at the start, already provided a starting ground for this and showed that better strategies are possible. 

The case of $c=1+\eps$ for sufficiently small $\eps>0$ seems also interesting, because in this
case the spherical shells are very thin, and perhaps one can provide better bounds

For higher dimensions, the most natural question is how to bring the upper and lower bounds closer. 
%In particular, proving tight lower bounds seems very challenging. 
%Randomization may be useful to reduce the expected competitive ratio. 
Considering the problem in other metric spaces, such as graphs, and searching for more complex targets, such as convex subsets or $k$-flats, seems also a fruitful line of research. For example, does searching for a $k$-flat in $\RR^d$ with predictions behaves like searching for a point in $\RR^d$ or like a point in $\RR^{d-k}$?

One can also study whether weaker assumptions on the prediction function suffice to 
achieve constant competitive ratio.
For example, if the prediction $\lambda(\cdot)$ satisfies 
$|p\target|\le \lambda(p)\le 2\cdot |p\target|+|p\target|^2$ for all $p\in \RR^d$,
can one still achieve constant competitive ratio?
In the spirit of~\cite{BanerjeeC-AGL_ITCS23}, can we afford having completely wrong prediction 
in some region of $\RR^d$ with non-zero Lebesgue measure, if the region does not contain the target?

\subsection*{Acknowledgements}
The authors would like to thank Kostas Tsakalidis for suggesting the area of algorithmic problems with predictions and Konrad Swanepoel for providing a simple argument for the proof of Lemma~\ref{lem:bound}.

This research was funded in part by the Slovenian Research and Innovation Agency
(P1-0297, J1-2452, N1-0218, N1-0285), in part by the Erd{\H o}s Center, and in part by the European 
Union (ERC grant KARST, 101071836; ERC grant 882971, GeoScape).
Views and opinions expressed are
however those of the authors only and do not necessarily reflect those
of the European Union or the European Research Council.  Neither the
European Union nor the granting authority can be held responsible for
them.

\bibliography{main}

\begin{thebibliography}{10}

\bibitem{AlpernG_Book03}
Steve Alpern and Shmuel Gal.
\newblock {\em The theory of search games and rendezvous}, volume~55 of {\em
  International series in operations research and management science}.
\newblock Kluwer, 2003.
\newblock \href {https://doi.org/10.1007/b100809} {\path{doi:10.1007/b100809}}.

\bibitem{Angelop_ITCS21}
Spyros Angelopoulos.
\newblock Online search with a hint.
\newblock In James~R. Lee, editor, {\em 12th Innovations in Theoretical
  Computer Science Conference, {ITCS} 2021, January 6-8, 2021, Virtual
  Conference}, volume 185 of {\em LIPIcs}, pages 51:1--51:16. Schloss Dagstuhl
  - Leibniz-Zentrum f{\"{u}}r Informatik, 2021.
\newblock \href {https://doi.org/10.4230/LIPICS.ITCS.2021.51}
  {\path{doi:10.4230/LIPICS.ITCS.2021.51}}.

\bibitem{Angelop_MFCS23}
Spyros Angelopoulos.
\newblock Competitive search in the line and the star with predictions.
\newblock In J{\'{e}}r{\^{o}}me Leroux, Sylvain Lombardy, and David Peleg,
  editors, {\em 48th International Symposium on Mathematical Foundations of
  Computer Science, {MFCS} 2023, August 28 to September 1, 2023, Bordeaux,
  France}, volume 272 of {\em LIPIcs}, pages 12:1--12:15. Schloss Dagstuhl -
  Leibniz-Zentrum f{\"{u}}r Informatik, 2023.
\newblock \href {https://doi.org/10.4230/LIPICS.MFCS.2023.12}
  {\path{doi:10.4230/LIPICS.MFCS.2023.12}}.

\bibitem{Baeza-YatesCR_InfComp93}
Ricardo~A. Baeza{-}Yates, Joseph~C. Culberson, and Gregory J.~E. Rawlins.
\newblock Searching in the plane.
\newblock {\em Inf. Comput.}, 106(2):234--252, 1993.
\newblock \href {https://doi.org/10.1006/INCO.1993.1054}
  {\path{doi:10.1006/INCO.1993.1054}}.

\bibitem{BanerjeeC-AGL_ITCS23}
Siddhartha Banerjee, Vincent Cohen{-}Addad, Anupam Gupta, and Zhouzi Li.
\newblock Graph searching with predictions.
\newblock In Yael~Tauman Kalai, editor, {\em 14th Innovations in Theoretical
  Computer Science Conference, {ITCS} 2023, January 10-13, 2023, MIT,
  Cambridge, Massachusetts, {USA}}, volume 251 of {\em LIPIcs}, pages
  12:1--12:24. Schloss Dagstuhl - Leibniz-Zentrum f{\"{u}}r Informatik, 2023.
\newblock \href {https://doi.org/10.4230/LIPICS.ITCS.2023.12}
  {\path{doi:10.4230/LIPICS.ITCS.2023.12}}.

\bibitem{Beck64}
Anatole Beck.
\newblock On the linear search problem.
\newblock {\em Israel Journal of Mathematics}, 2:221--228, 1964.
\newblock \href {https://doi.org/10.1007/BF02759737}
  {\path{doi:10.1007/BF02759737}}.

\bibitem{BeckN70}
Anatole Beck and Newman. D.J.
\newblock Yet more on the linear search problem.
\newblock {\em Israel Journal of Mathematics}, 8:419--429, 1970.
\newblock \href {https://doi.org/10.1007/BF02798690}
  {\path{doi:10.1007/BF02798690}}.

\bibitem{Bellman56}
Richard Bellman.
\newblock Minimization problem.
\newblock {\em Bull. Amer. Math. Soc.}, 62:270, 1956.

\bibitem{Bellman63}
Richard Bellman.
\newblock Problem 63-9, an optimal search.
\newblock {\em SIAM review}, 5(3):274--274, 1963.

\bibitem{BoseCD_TCS15}
Prosenjit Bose, Jean{-}Lou~De Carufel, and Stephane Durocher.
\newblock Searching on a line: {A} complete characterization of the optimal
  solution.
\newblock {\em Theor. Comput. Sci.}, 569:24--42, 2015.
\newblock \href {https://doi.org/10.1016/J.TCS.2014.12.007}
  {\path{doi:10.1016/J.TCS.2014.12.007}}.

\bibitem{CabelloG24}
Sergio Cabello and Panos Giannopoulos.
\newblock Searching in {E}uclidean spaces with predictions.
\newblock In Marcin Bienkowski and Matthias Englert, editors, {\em
  Approximation and Online Algorithms - 22nd International Workshop, {WAOA}
  2024, Egham, UK, September 5-6, 2024, Proceedings}, volume 15269 of {\em
  Lecture Notes in Computer Science}, pages 31--45. Springer, 2024.
\newblock \href {https://doi.org/10.1007/978-3-031-81396-2\_3}
  {\path{doi:10.1007/978-3-031-81396-2\_3}}.

\bibitem{DumitrescuT16}
Adrian Dumitrescu and Csaba~D. T{\'{o}}th.
\newblock The traveling salesman problem for lines, balls, and planes.
\newblock {\em {ACM} Trans. Algorithms}, 12(3):43:1--43:29, 2016.
\newblock \href {https://doi.org/10.1145/2850418} {\path{doi:10.1145/2850418}}.

\bibitem{ElbassioniFS09}
Khaled~M. Elbassioni, Aleksei~V. Fishkin, and Ren{\'{e}} Sitters.
\newblock Approximation algorithms for the {E}uclidean traveling salesman
  problem with discrete and continuous neighborhoods.
\newblock {\em Int. J. Comput. Geom. Appl.}, 19(2):173--193, 2009.
\newblock \href {https://doi.org/10.1142/S0218195909002897}
  {\path{doi:10.1142/S0218195909002897}}.

\bibitem{few1955shortest}
Leonard Few.
\newblock The shortest path and the shortest road through $n$ points.
\newblock {\em Mathematika}, 2(2):141--144, 1955.
\newblock \href {https://doi.org/10.1112/S0025579300000784}
  {\path{doi:10.1112/S0025579300000784}}.

\bibitem{Gal72}
Shmuel Gal.
\newblock A general search game.
\newblock {\em Israel Journal of Mathematics}, 12:32--45, 1972.

\bibitem{Gal74}
Shmuel Gal.
\newblock Minimax solutions for linear search problems.
\newblock {\em SIAM Journal on Applied Mathematics}, 27(1):17--30, 1974.

\bibitem{Gal80}
Shmuel Gal.
\newblock {\em Search Games}.
\newblock Academic Press, New York, 1980.

\bibitem{GhoshK_CompSciRev10}
Subir~Kumar Ghosh and Rolf Klein.
\newblock Online algorithms for searching and exploration in the plane.
\newblock {\em Comput. Sci. Rev.}, 4(4):189--201, 2010.
\newblock \href {https://doi.org/10.1016/J.COSREV.2010.05.001}
  {\path{doi:10.1016/J.COSREV.2010.05.001}}.

\bibitem{Har-PeledM06}
Sariel Har{-}Peled and Manor Mendel.
\newblock Fast construction of nets in low-dimensional metrics and their
  applications.
\newblock {\em {SIAM} J. Comput.}, 35(5):1148--1184, 2006.
\newblock \href {https://doi.org/10.1137/S0097539704446281}
  {\path{doi:10.1137/S0097539704446281}}.

\bibitem{HipkeIKL_DAM99}
Christoph~A. Hipke, Christian Icking, Rolf Klein, and Elmar Langetepe.
\newblock How to find a point on a line within a fixed distance.
\newblock {\em Discret. Appl. Math.}, 93(1):67--73, 1999.
\newblock \href {https://doi.org/10.1016/S0166-218X(99)00009-8}
  {\path{doi:10.1016/S0166-218X(99)00009-8}}.

\bibitem{Isbell57}
John~R. Isbell.
\newblock An optimal search pattern.
\newblock {\em Naval Research Logistics Quarterly}, 4:357--359, 1957.

\bibitem{Kershaw83}
D.~Kershaw.
\newblock Some extensions of {W}. {G}autschi's inequalities for the gamma
  function.
\newblock {\em Mathematics of Computation}, 41(164):607--611, 1983.
\newblock \href {https://doi.org/10.2307/2007697} {\path{doi:10.2307/2007697}}.

\bibitem{Klein05}
Rolf Klein.
\newblock {\em Algorithmische Geometrie}.
\newblock Springer-Verlag, Heidelberg, second edition, 2005.
\newblock \href {https://doi.org/10.1007/3-540-27619-X}
  {\path{doi:10.1007/3-540-27619-X}}.

\bibitem{KupavskiiW_DistrComp21}
Andrey Kupavskii and Emo Welzl.
\newblock Lower bounds for searching robots, some faulty.
\newblock {\em Distributed Comput.}, 34(4):229--237, 2021.
\newblock \href {https://doi.org/10.1007/S00446-019-00358-Y}
  {\path{doi:10.1007/S00446-019-00358-Y}}.

\bibitem{Langetepe_SODA10}
Elmar Langetepe.
\newblock On the optimality of spiral search.
\newblock In Moses Charikar, editor, {\em Proceedings of the Twenty-First
  Annual {ACM-SIAM} Symposium on Discrete Algorithms, {SODA} 2010, Austin,
  Texas, USA, January 17-19, 2010}, pages 1--12. {SIAM}, 2010.
\newblock \href {https://doi.org/10.1137/1.9781611973075.1}
  {\path{doi:10.1137/1.9781611973075.1}}.

\bibitem{Lopez-OrtizS01}
Alejandro L{\'{o}}pez{-}Ortiz and Sven Schuierer.
\newblock The ultimate strategy to search on m rays?
\newblock {\em Theor. Comput. Sci.}, 261(2):267--295, 2001.
\newblock \href {https://doi.org/10.1016/S0304-3975(00)00144-4}
  {\path{doi:10.1016/S0304-3975(00)00144-4}}.

\bibitem{MitzenmacherV_BeyondW_CAA20}
Michael Mitzenmacher and Sergei Vassilvitskii.
\newblock Algorithms with predictions.
\newblock In Tim Roughgarden, editor, {\em Beyond the Worst-Case Analysis of
  Algorithms}, pages 646--662. Cambridge University Press, 2020.
\newblock \href {https://doi.org/10.1017/9781108637435.037}
  {\path{doi:10.1017/9781108637435.037}}.

\bibitem{Sagan_Book}
Hans Sagan.
\newblock {\em Space-filling curves}.
\newblock Universitext. Springer New York, NY, 1994.
\newblock \href {https://doi.org/10.1007/978-1-4612-0871-6}
  {\path{doi:10.1007/978-1-4612-0871-6}}.

\bibitem{Schuierer98}
Sven Schuierer.
\newblock On-line searching in simple polygons.
\newblock In Henrik~I. Christensen, Horst Bunke, and Hartmut Noltemeier,
  editors, {\em Sensor Based Intelligent Robots, International Workshop,
  Dagstuhl Castle, Germany, September 28 - October 2, 1998, Selected Papers},
  volume 1724 of {\em Lecture Notes in Computer Science}, pages 220--239.
  Springer, 1998.
\newblock \href {https://doi.org/10.1007/10705474_12}
  {\path{doi:10.1007/10705474_12}}.

\bibitem{Swanepoel}
Konrad Swanepoel.
\newblock Personal communication, 2023.

\bibitem{VaughanW_JRAM18}
Robert~C. Vaughan and Trevor~D. Wooley.
\newblock The asymptotic formula in {W}aring’s problem: Higher order
  expansions.
\newblock {\em Journal f\"ur die reine und angewandte Mathematik (Crelles
  Journal)}, 2018(742):17--46, 2018.
\newblock \href {https://doi.org/10.1515/crelle-2015-0098}
  {\path{doi:10.1515/crelle-2015-0098}}.

\end{thebibliography}
\end{document}